\theoremstyle{definition}
\newcommand{\mypar}[1]{{\bf #1.}}
\newtheorem{myThm}{Theorem}
\newcommand{\R}{\ensuremath{\mathbb{R}}}
\DeclareMathOperator{\Id}{I}
\def\x{\mathbf{x}}
\def\y{\mathbf{y}}
\def\h{\mathbf{h}}
\def\p{\mathbf{p}}
\def\ee{\mathbf{e}}
\def\s{\mathbf{s}}
\def\t{\mathbf{t}}
\def\tt{\mathbf{t}}
\def\vv{\mathbf{v}}
\def\z{\mathbf{z}}
\def\V{\mathcal{V}}
\def\E{\mathcal{E}}
\def\M{\mathcal{M}}
\def\Aa{\mathbb{A}}
\def\Bb{\mathbb{B}}
\def\Cc{\mathbb{C}}
\def\Dd{\mathbb{D}}
\def\Ee{\mathbb{E}}
\def\Hh{\mathbb{H}}
\DeclareMathOperator{\Adj}{A}
\DeclareMathOperator{\CC}{C}
\DeclareMathOperator{\D}{D}
\DeclareMathOperator{\HH}{H}
\DeclareMathOperator{\Pj}{P}
\DeclareMathOperator{\Q}{Q}
\DeclareMathOperator{\J}{J}
\DeclareMathOperator{\Vm}{V}
\DeclareMathOperator{\EE}{E}
\DeclareMathOperator{\T}{T}
\DeclareMathOperator{\X}{X}
\DeclareMathOperator{\Y}{Y}
\def\Xw{\widehat{\X}}
\DeclareMathOperator{\Z}{Z}
\DeclareMathOperator{\Ss}{S}
\title{Graph Unrolling Networks: Interpretable Neural Networks for Graph Signal Denoising}
\author{Siheng Chen~\IEEEmembership{ Member,~IEEE}, Yonina C. Eldar~\IEEEmembership{Fellow, IEEE}, Lingxiao Zhao\thanks{S. Chen is with Mitsubishi Electric Research Laboratories (MERL), Cambridge, MA, USA. Email: schen@merl.com. Y. C. Eldar is with 
Math and CS department at Weizmann Institute of Science, Rehovot, Israel. Email: yonina.eldar@weizmann.ac.il. L. Zhao is with Heinz College at Carnegie Mellon University, Pittrsbugh, PA, USA. Email: lingxiao@cmu.edu. }}
\begin{document}

\maketitle

\begin{abstract}
We propose an interpretable graph neural network framework to denoise single or multiple noisy graph signals. The proposed~\emph{graph unrolling networks} expand algorithm unrolling to the graph domain and provide an interpretation of the architecture design from a signal processing perspective. We unroll an iterative denoising algorithm by mapping each iteration into a single network layer where the feed-forward process is equivalent to iteratively denoising graph signals. We train the graph unrolling networks through unsupervised learning, where the input noisy graph signals are used to supervise the networks. By leveraging the learning ability of neural networks, we adaptively capture appropriate priors from input noisy graph signals, instead of manually choosing signal priors. A core component of graph unrolling networks is the~\emph{edge-weight-sharing graph convolution operation}, which parameterizes each edge weight by a trainable kernel function where the trainable  parameters are shared by all the edges. The proposed convolution is permutation-equivariant and can flexibly adjust the edge weights to various graph signals. We then consider two special cases of this class of networks, graph unrolling sparse coding (GUSC) and graph unrolling trend filtering (GUTF), by unrolling sparse coding and trend filtering, respectively. To validate the proposed methods, we conduct extensive experiments on both real-world datasets and simulated datasets, and demonstrate that our methods have smaller denoising errors than conventional denoising algorithms and state-of-the-art graph neural networks. For denoising a single smooth graph signal, the normalized mean square error of the proposed networks is around $40\%$ and $60\%$ lower than that of graph Laplacian denoising and graph wavelets, respectively.
\end{abstract}

\begin{keywords}
Graph neural networks, algorithm unrolling, graph signal denoising, graph convolution, weight sharing
\end{keywords}

\section{Introduction}
\label{sec:intro}
Data today is often generated from a diverse sources, including social, citation, biological, and physical infrastructure~\cite{OrtegaFKMV:18}. Unlike time-series signals or images, such signals possess complex and irregular structures, which can be modeled as graphs. Analyzing graph signals requires dealing with the underlying irregular relationships. Graph signal processing generalizes the classical signal processing toolbox to the graph domain and provides a series of techniques to process graph signals~\cite{OrtegaFKMV:18}, including graph-based transformations~\cite{HammondVG:11,ShumanFV:16}, sampling and recovery of graph signals~\cite{ChenVSK:15,AnisAO:15} and graph topology learning~\cite{DongTRF:19}. Graph neural networks provide a powerful framework to learn from graph signals with graphs as induced biases~\cite{BronsteinBLSV:17}. Permeating the benefits of deep learning to the graph domain, graph convolutional networks and variants have attained remarkable success in social network analysis~\cite{KipfW:17}, 3D point cloud processing~\cite{WangSLSBS:19}, quantum chemistry~\cite{GilmerSRVD:17} and computer vision~\cite{LiCZZWT:20}.

In this work, we consider denoising graph signals~\cite{ShumanNFOV:13,ChenSMK:14a}. In classical signal processing, signal denoising is one of the most ubiquitous tasks~\cite{VetterliKG:12}. To handle graph signals, there are two mainstream approaches: graph-regularization-based optimization and graph dictionary design. The optimization approach usually introduces a graph-regularization term that promotes certain properties of the graph signal and solves a regularized optimization problem to obtain a denoised solution~\cite{ChenSMK:14a,WangSST:16,VarmaLKC:20}.  In image denoising, the total variation, which captures the integral of the absolute gradient of the image~\cite{RudinOF:92,Chambolle:04}, is often used. Minimizing the total variation of an image helps remove unwanted noise while preserving important details, such as edges and contours. In graph signal denoising, the regularizers are often chosen to relate to the graph properties. For example, a popular choice is a quadratic form of the graph Laplacian, which captures the second-order difference of a graph signal, corresponding to a graph smoothness prior~\cite{Chung:96,ShumanNFOV:13}. Graph total variation captures the sparsity of the first-order difference of a graph signal, reflecting piecewise-constant prior~\cite{ChenSMK:15,WangSST:16}. Some variations of graph total variation can capture more complicated priors and lead to fast algorithms~\cite{VarmaLKC:20}.

In comparison, the graph-dictionary approach aims to reconstruct graph signals through a predesigned graph dictionary, such as graph wavelets~\cite{HammondVG:11}, windowed graph Fourier transforms~\cite{ShafipourKM:19}, and graph frames~\cite{ShumanWHV:15}. These dictionaries are essentially variants of graph filters. A reconstructed graph signal consists of a sparse combination of elementary graph signals in a graph dictionary.  The combination coefficients can be obtained through sparse coding algorithms, such as matching pursuit and basis pursuit~\cite{EldarK:12,Eldar:15}. A fast implementation of graph dictionary are graph filter banks, which use a series of band-pass graph filters that expands the input graph signal into multiple subband components~\cite{NarangO:12,ShumanWHV:15,ShumanFV:16,SakiyamaWTO:19}. By adjusting the component in each subband, a graph filter bank can flexibly modify a graph signal and suppress noises, especially in the high-frequency band. 

A fundamental challenge for both denoising approaches is that we may not know an appropriate prior on the noiseless graph signals in practice. It is then hard to either choose an appropriate graph-regularization term or design an appropriate graph dictionary. Furthermore, some graph priors are too complicated to be explicitly and precisely described in mathematical terms or may lead to complicated and computationally intensive algorithms.

To solve this issue, it is desirable to learn an appropriate prior from given graph signals; in other words, the denoising algorithm should have sufficient feasibility to learn from and adapt to arbitrary signal priors. Deep neural networks have demonstrated strong power in learning abstract, yet effective features from a huge amount of data~\cite{Goodfellow:2016}. As the extension of neural networks to the graph domain, graph neural networks have also received a lot of attention and achieved significant success in social network analysis and geometric data analysis~\cite{KipfW:17, Battaglia:18}. One mainstream graph neural network architecture is the graph convolutional network (GCN), which relies on a layered  architecture that consists of trainable graph convolution operations, followed by pointwise nonlinear functions~\cite{BrunaZSL:13,DefferrardBV:16,KipfW:17}. Some variants include  graph attention networks~\cite{VelivckovicCCRLB:18}, deep graph infomax~\cite{VelickovicFHLBH:19}, simple graph convolution~\cite{WuSZFYW:19}, and the graph U-net~\cite{GaoJ:19}. These GCN-based models have shown remarkable success in graph-based semi-supervised learning~\cite{KipfW:17,VelivckovicCCRLB:18} and graph classification tasks~\cite{ZhangCNC:18}. However, these neural network architectures are typically designed through trial and error. It is thus hard to explain the design rationale and further improve the architectures~\cite{SolomonCZYHLSE:18}. 

In this work, we leverage the powerful learning ability of graph neural networks and combine them with interpretablity based on a signal processing perspective. Furthermore, most graph neural networks are developed for supervised-learning tasks, such node classification~\cite{KipfW:17}, link prediction~\cite{ZhangC:18} and graph classification~\cite{GaoJ:19}. Those tasks require a large number of ground-truth labels, which is expensive to obtain. Here we consider an unsupervised-learning setting, where the networks have to learn from a few noisy graph signals and the ground-truth noiseless graph signals are unknown. Through unsupervised learning, we demonstrate the generalization ability of the proposed graph neural networks.

Our goal is to develop a framework for graph network denoising by combining the advantages of both conventional graph signal denoising algorithms and graph neural networks. On the one hand, we follow the iterative procedures of conventional denoising algorithms, which provides interpretability and explicit graph regularization; on the other hand, we parameterize a few mathematically-designed operations through neural networks and train the parameters, which provides flexibility and learning ability. We bridge between conventional graph signal denoising algorithms and graph neural networks by using the powerful framework of algorithm unrolling~\cite{MongaLE:19}. It provides a concrete and systematic connection between iterative algorithms in signal processing and deep neural networks, and paves the way to developing interpretable network architectures. The related unrolling techniques have been successfully applied in many problem areas, such as sparse coding~\cite{GregorL:10}, ultrasound signal processing~\cite{SolomonCZYHLSE:18},
image deblurring~\cite{LiTME:19,LiTGME:20} and image denoising~\cite{ShabiliMB:20}.

In this work, we expand algorithm unrolling to the graph domain. We first propose a general iterative algorithm for graph signal denoising and then transform it to a graph neural network through algorithm unrolling, where each iteration is mapped to a network layer. Compared to conventional denoising algorithms~\cite{ChenSMK:14a}, the proposed graph unrolling network is able to learn a variety of priors from given graph signals by leveraging deep neural networks. Compared to many other graph neural networks~\cite{KipfW:17}, the proposed graph unrolling network is interpretable by following analytical iterative steps. To train graph unrolling networks, we use single or multiple noisy graph signals and minimize the difference between the original input, which is the noisy graph  signal, and the network output, which is the denoised graph signal; in other words, the input  noisy measurements are  used  to  supervise the neural network training. Surprisingly, even when we train until convergence, the output does not overfit the noisy input in most cases. The intuition is that the proposed operations and architectures of graph unrolling networks carry implicit graph regularization and thus avoid overfitting.

A core component in the proposed graph unrolling networks is the edge-weight-sharing graph convolution operation. The proposed graph convolution parameterizes edge weights by a trainable kernel function, which maps a pair of vertex coordinates in the graph spectral domain to an edge weight. Since the trainable parameters in the kernel function are shared by all the edges, this graph convolution has a weight-sharing property. We also show that it is equivariant to the permutation of vertices. Our convolution is different from conventional graph filtering in graph signal processing~\cite{ChenSMK:14a}, as it includes trainable parameters that can transform a graph signal from the original graph vertex domain to a high-dimensional feature domain. It is also different from  many trainable graph convolutions~\cite{KipfW:17}, since it adjusts edge weights according to the input graph signals during training, which makes it flexible to capture complicated signal priors.

Based on the graph unrolling network framework, we further propose two specific architectures by unrolling graph sparse coding and graph trend filtering. These two are typical denoising algorithms based on graph dictionary design and graph-regularization-based optimization, respectively. We consider both networks to demonstrate the generalization of the proposed framework. Both methods are special cases of the  general  iterative  algorithm  for graph signal denoising.

To validate the empirical performance of the proposed method, we conduct a series of experiments on both simulated datasets and real-world datasets with Gaussian noises, mixture noises and Bernoulli noises. We find that graph unrolling networks consistently achieve better denoising performances than conventional graph signal denoising algorithms and state-of-the-art graph neural networks on various types of graph signals and noise models. We also find that even for denoising a single smooth graph signal, the proposed graph unrolling networks are around $40\%$ and $60\%$ better than graph Laplacian denoising~\cite{ShumanNFOV:13} and graph wavelets~\cite{HammondVG:11}, respectively. This demonstrates that the unrolling approach allows to obtain improved results over existing methods even using a single training point.

The main contributions of this work include: 
\begin{itemize}
\item We propose interpretable graph unrolling networks by unrolling a general iterative algorithm for graph signal denoising in an unsupervised-learning setting;

\item We propose a trainable edge-weight-sharing graph convolution whose trainable parameters are shared across all the edges. It is also equivariant to the permutation of vertices;

\item We propose two specific network architectures under the umbrella of graph unrolling networks: graph unrolling sparse coding and graph unrolling trend filtering; and

\item  We conduct experiments on both simulated and real-world data to validate that the proposed denoising methods significantly outperform both conventional graph signal denoising methods and state-of-the-art graph neural networks on various types of graph signals and noise models. The proposed networks work best even for a single training sample.
\end{itemize}

The rest of the paper is organized as follows: Section~\ref{sec:formulation} formulates the graph signal denoising problem and revisits a few classical denoising methods. Section~\ref{sec:graph_convolution} proposes an edge-weight-sharing graph convolution operation, which is a core operation in the proposed network. Section~\ref{sec:GUN} describes the general framework of graph unrolling networks and provides two specific architectures: graph sparse coding and graph trend filtering. Experiments validating the advantages of our methods are provided in Section~\ref{sec:experiments}.

\section{Problem formulation}
\label{sec:formulation}
In this section, we mathematically formulate the task of graph signal denoising and review a few classical denoising methods, which lay the foundation for the proposed methods. 

We consider a graph $G = (\V, \E, \Adj)$, where $\V = \{v_n\}_{n=1}^{N}$ is the set of \emph{vertices}, $\E = \{e_m\}_{m=1}^{M}$ is the set of undirected \emph{edges}, and $\Adj \in \R^{N \times N}$ is the graph adjacency matrix, representing connections between vertices. The weight $\Adj_{i,j}$ of an edge from the $i$th to the $j$th vertex characterizes the relation, such as similarity or dependency, between the corresponding signal values. Using the graph representation $G$, a \emph{graph signal} is defined as a map that  assigns a signal coefficient $x_n \in \R$ to the vertex $v_n$. A graph signal can be written as a length-$N$ vector defined by
\begin{equation*}
\label{eq:graph_signal}
  \x \ = \ \begin{bmatrix}
 x_1 & x_2 & \ldots & x_{N}
\end{bmatrix}^T,
\end{equation*}
where the $n$th vector element $x_n$ is indexed by the vertex $v_n$. 

The multiplication between the graph adjacency matrix and a graph signal, $\Adj\x$, replaces the signal coefficient at each vertex with the weighted linear combination of the signal coefficients at the corresponding neighbors. In other words, the graph adjacency matrix enables the value at each vertex shift to its neighbors; we thus call it a~\emph{graph shift operator}~\cite{SandryhailaM:13}.  In order for the output norm not to increase after graph shifting, we normalize the graph adjacency matrix, $\Adj^{\rm norm} = \Adj / |\lambda_{\rm max}(\Adj)|$, where $\lambda_{\rm max}(\Adj)$ denotes the eigenvalue of $\Adj$ with the largest magnitude. The normalized matrix has spectral
norm $\left\|\Adj^{\rm norm}\right\|_2 = 1$. In this paper, we assume that the all graph shift operators are normalized, that is $\Adj = \Adj^{\rm norm}$. This property will be used in Section~\ref{sec:graph_convolution_theory}.

Assume that we are given a length-$N$ noisy measurement of a graph signal
\begin{equation}
\label{eq:noise_model}
\t = \x + \ee,
\end{equation} 
where $\x$ is the noiseless graph signal and $\ee$ is noise. The goal of graph signal denoising is to recover $\x$ from $\t$ by removing the noise. We can further extend this setting to multiple graph signals. Consider $K$ measurements in a $N \times K$ matrix,
$
  \T \ = \ \begin{bmatrix} \tt^{(1)} & \tt^{(2)} &\ldots & \tt^{(K)} \end{bmatrix} \ = \ \X + \EE,
$
where $\X = \begin{bmatrix} \x^{(1)} & \x^{(2)} & \ldots & \x^{(k)} & \ldots & \x^{(K)} \end{bmatrix}$ is a matrix of $K$ noiseless graph signals, and $\EE$ is a $N\times K$ matrix that contains independent and identically distributed random noises. We thus aim to recover $\X$ from $\T$ by removing the noise $\EE$.

Without any prior information on the noiseless graph signals, it is impossible to split noises from the measurements. Possible priors include sparsity, graph smoothness and graph piecewise-smoothness~\cite{ChenVSK:16}. Here we consider a general graph signal model in which the graph signal is generated through graph filtering over vertices; that is,
\begin{eqnarray*}
 \x \ =  \ \h *_v \s = \sum_{\ell=1}^L h_{\ell} \Adj^\ell \s, 
\end{eqnarray*}
where $\s \in \R^N$ is a base graph signal, which may not have any graph-related properties, $*_v $ indicates a convolution on the graph vertex domain and $\h = \begin{bmatrix}
h_{1} & h_{2} & \cdots & h_{L}  \end{bmatrix}^T \in \R^L$ are the predesigned and fixed filter coefficients with $L$ the filter length. Here we consider a typical design of a graph filter, which is a polynomial of the graph shift; see detailed discussion in Section~\ref{sec:graph_convolution}. The graph filtering process modifies a given base graph signal according to certain patterns of the graph and explicitly regularizes the output.

Based on this graph signal model, we can remove noises by solving the following optimization problem:
\begin{eqnarray}
\label{eq:graph_signal_denoising_optimization}
  && \min_{\s \in \R^N} \frac{1}{2}||\t - \x ||_2^2 +  u( \Pj \x) + r( \Q \s),
	\\ \nonumber
	&& {\rm subject~to~~} \x \ =  \  \h *_v \s,
\end{eqnarray}
where $u(\cdot), r(\cdot) \in \R$ are additional regularization terms on $\s$ and $\x$ respectively and $\Pj$ and $\Q$ are two matrices. The denoised graph signal is then given by $h *_v \s$. 

To connect the general graph signal denoising problem~\eqref{eq:graph_signal_denoising_optimization} to  previous works, we present several special cases of~\eqref{eq:graph_signal_denoising_optimization}.
\subsubsection{Graph sparse coding}
\label{sec:GSC}
Here we consider reconstructing a noiseless graph signal through graph filtering and regularizing the graph signal to be sparse. The optimization problem of graph sparse coding is
\begin{equation}
\label{eq:graph_sparse_coding}
\min_{\s \in \R^N} \frac{1}{2}||\t - \sum_{\ell=1}^L h_{\ell} \Adj^\ell \s ||_2^2 +  \alpha\left\| \s \right\|_1.
\end{equation}

In this setting, $\x \ = \ \h *_v \s \ = \ \sum_{\ell=1}^L h_{\ell} \Adj^\ell \s$, $u( \cdot) = 0, r(\cdot) = \alpha\left\| \cdot \right\|_1$ and $\Pj = \Q = \Id$.  The term $\left\| \s \right\|_1$ promotes sparsity of the base signal and $\h *_v \s$ allows the sparse signal coefficients to diffuse over the graph. Many denoising algorithms based on graph filter banks and graph dictionary representations are variations of graph sparse coding~\cite{ChenVSK:16}. They design various graph filters to adjust subbbands' responses and use matching pursuit or basis pursuit to solve~\eqref{eq:graph_sparse_coding}.

\subsubsection{Graph Laplacian denoising}
Here we consider using the the second-order difference to regularize a graph signal. The optimization problem of graph Laplacian denoising is
\begin{equation}
\label{eq:graph_laplacian_denoising}
\min_{\x \in \R^N} \frac{1}{2}||\t - \x ||_2^2 +  \alpha \x^T \mathcal{L} \x,
\end{equation}
where $\mathcal{L} = \D - \Adj \in \R^{N \times N}$ is
the graph Laplacian matrix with diagonal degree matrix $\D_{i,i} = \sum_{j} \Adj_{i,j}$. In this setting, $\x = \h *_v \s = \s$, $u(\cdot) = \alpha \left\|\cdot\right\|_2^2 , r(\s) = 0$, $\Pj = \mathcal{L}^{\frac{1}{2}}$ and $\Q = \Id$.  Here we  do not consider the effect of graph filtering and directly regularize the graph signal $\x$. The term 
\begin{equation*}
u(\Pj \x) \ = \ \alpha \left\| \mathcal{L}^\frac{1}{2} \x \right\|_2^2 =  \alpha \x^T \mathcal{L} \x = \alpha \sum_{ (i,j) \in \E} \Adj_{i,j} (x_i - x_j)^2,
\end{equation*}
is well known as the quadratic form of the graph Laplacian, which has been widely used in graph-based semi-supervised learning, spectral clustering and graph signal processing~\cite{ZhuLG:03,NgJW:01,OrtegaFKMV:18}. It captures the second-order difference of a graph signal by accumulating the
pairwise differences between signal values associated with adjacent vertices.
When solving~\eqref{eq:graph_laplacian_denoising}, we regularize $\x$ to be smooth~\cite{ZhuLG:03}.

\subsubsection{Graph trend filtering}
\label{sec:GTF}
Here we consider using the first-order difference to regularize a graph signal. The optimization problem is
\begin{equation}
\label{eq:graph_trend_filtering}
\min_{\x \in \R^N} \frac{1}{2}||\t - \x ||_2^2 +  \alpha \left\| \Delta \x \right\|_1,
\end{equation}
where $\Delta$ is a $M \times N$ graph incidence matrix with $M$ the number of edges and $N$ the number of nodes. Each row of $\Delta$ corresponds to an edge. For example, if $e_i$ is an edge that connects the $j$th vertex to the $k$th vertex ($j < k$), the elements of the $i$th row of $\Delta$ are
\begin{equation}
\label{eq:Delta}
 \Delta_{i, \ell} = 
  \left\{ 
    \begin{array}{rl}
      -  \sqrt{ \Adj_{j,k} }, & \ell = j;\\
        \sqrt{ \Adj_{j,k} }, & \ell = k;\\
      0, & \mbox{otherwise}.
  \end{array} \right.
\end{equation}
The graph incident matrix measures the first-order difference and satisfies $\Delta^T \Delta = \mathcal{L}$. In this setting $\x = \h *_v \s = \s$, $u(\cdot) = \alpha \left\| \cdot \right\|_1$, $r(\cdot) = 0$, $\Pj = \Delta$ and $\Q = \Id$. The term
\begin{equation*}
u(\Pj \x) \ = \ \alpha \left\| \Delta \x \right\|_1 =  \alpha \sum_{ (i,j) \in \E} \Adj_{i,j} |x_i - x_j|,
\end{equation*}
is known as the graph total variation and is often used in graph signal denoising.  Similar to the graph-Laplacian regularization, the graph total variation considers pairwise differences. However, it uses the $\ell_1$ norm to promote sparsity of the first-order differences. When solving~\eqref{eq:graph_trend_filtering}, we encourage $\x$ to be piecewise-constant~\cite{WangSST:16,VarmaLKC:20}.

In Section~\ref{sec:GUN}, we solve the general graph signal denoising problem~\eqref{eq:graph_signal_denoising_optimization} through algorithm unrolling and propose a framework for developing graph unrolling networks. Before that, we first propose a core component of graph unrolling networks: adaptive graph convolution.

\section{Edge-weight-sharing graph convolution}
\label{sec:graph_convolution}
In this section, we present a trainable graph convolution operation\footnote{To be consistent, graph convolution in this paper means graph filtering. In fact, graph convolution and graph filtering have two distinct meanings. Graph convolution takes two graph signals as inputs and outputs a third graph signal; graph filtering takes a graph signal and a graph filter as inputs and outputs another graph signal.}, which can be trained in end-to-end learning. The proposed convolution parameterizes each edge weight by a trainable kernel function whose trainable parameters are shared across all the edges; we thus call it~\emph{edge-weight-sharing graph convolution}. It will be used as a building block of the graph unrolling networks in Section~\ref{sec:GUN}.  Here we first revisit the standard graph convolution used in signal processing, and then equip it with trainable parameters, suitable for neural networks.

\subsection{Graph convolution in signal processing}
We first revisit 1D cyclic convolution in conventional discrete signal processing. Let $\x \in \R^N$ be a time-series. The output time-series after cyclic convolution is
\begin{equation*}
\y  \ = \ \h * \x \ = \ \sum_{\ell=1}^L h_{\ell} \CC^{\ell} \x \in \R^N,
\end{equation*}
where $L$ is the length of a filter, $\h = \begin{bmatrix}
h_{1} & h_{2} & \cdots h_{L}
\end{bmatrix}^T \in \R^L$ are the filter coefficients and the cyclic-permutation matrix
\begin{equation*}
\CC  \ = \
\begin{bmatrix}
0 & 0 & \cdots & 0 & 1 \\
1 & 0 & \cdots & 0 & 0 \\ 
0 & 1 & \ddots & 0 & 0 \\ 
\vdots & \vdots & \ddots & \ddots \\ 
0 & 0 & \ddots & 1 & 0 \\ 
\end{bmatrix}
\in \R^{N \times N}
\end{equation*}
is a matrix representation of a directed cyclic graph~\cite{SandryhailaM:13}. It reflects the underlying structure of a finite, periodic discrete time series. All edges are directed and have the same weight 1, reflecting the causality of a time series. A polynomial of the cyclic-permutation matrix, $\sum_{\ell=1}^L h_{\ell} \CC^{\ell}$, is a filter in the time domain. The essence of convolution/filtering is to update a signal coefficient by weighted averaging of the neighboring coefficients.  The neighbors are defined based on the cyclic-permutation matrix $\CC$ where the weights, or the filter coefficients, are shared across the entire signal.

We can use a mathematical analogy to generalize convolution from the time domain to the graph domain; that is, we simply replace the cyclic-permutation matrix $\CC$ by the graph adjacency matrix $\Adj$~\cite{SandryhailaM:13,OrtegaFKMV:18}. Let $\x \in \R^N$ be a a graph signal. The output graph signal after graph convolution is the length $N$ vector
\begin{equation}
\label{eq:graph_convolution}
\y  \ = \ \h *_v \x \ = \ \sum_{\ell=1}^L h_{\ell} \Adj^{\ell} \x.
\end{equation} 
The filter coefficients $\h = \begin{bmatrix}
h_{1} & h_{2} & \cdots & h_{L}\end{bmatrix}$ are usually fixed and designed based on the theories of graph filter banks and graph wavelets~\cite{HammondVG:11}. Some variations also consider the vertex-variant graph convolution~\cite{SegarraMR:16}, where each filter coefficient $h_{\ell}$ is expanded to be a vector, and 
the edge-variant graph convolution~\cite{CoutinoIL:17,IsufiGR:20}, where 
each filter coefficient $h_{\ell}$ is expanded to be a matrix. In this paper, we consider~\eqref{eq:graph_convolution} as our default graph convolution in any non-neural-network-based model. For example, in the general graph signal denoising problem~\eqref{eq:graph_signal_denoising_optimization}, the graph convolution  follows the definition in~\eqref{eq:graph_convolution}. We will use $\nabla \h *_v \x = \h *_v = \sum_{\ell=1}^L h_{\ell} \Adj^{\ell}$ to denote the derivative of $\h *_v \x$, which will be used in Section~\ref{sec:GUN}.

\subsection{Graph convolution in neural networks}
As one of the most successful neural network models, convolution neural networks (CNNs) use a sequence of trainable convolution operations to extract deep features from input data. The convolution in CNNs follows the same spirit of the conventional convolution. . At the same time, it allows feature learning in a high-dimensional space, which is one of the most important characteristics of CNNs~\cite{Goodfellow:2016}. A convolution operator usually carries a large number of trainable parameters and allows for multiple-channel inputs and multiple-channel outputs.

Let $\X = \begin{bmatrix}
\x^{(1)} & \x^{(2)} & \ldots & \x^{(K)} \end{bmatrix} \in \R^{N \times K}$ be a $K$-channel signal; in other words, there are $K$ features at each time stamp. Let a three-mode tensor $\Hh \in \R^{L \times K \times K'}$ be a collection of trainable filter coefficients that takes a $K$-channel signal as input and outputs a $K'$-channel signal. Each convolution layer operates as $\Y = \Hh * \X \in \R^{N \times K'}$, with $k'$th output channel
\begin{equation}
\label{eq:neural_convolution}
\y^{(k')} \ = \ \sum_{\ell=1}^L \sum_{k=1}^{K} \Hh_{\ell,k,k'} \CC^{\ell} \x^{(k)},
\end{equation}
where $\ell$ is the index of the filter length, $k$ is the index of the input channel, and $k'$ is the index of the output channel. 
Each element in $\Hh$ is trainable and is updated in an end-to-end learning process. This is the standard convolution operation that is widely used in many applications, such as speech recognition and computer vision~\cite{Goodfellow:2016}. Comparing to conventional convolution in signal processing,~\eqref{eq:neural_convolution} introduces trainable parameters and the choices of $K, K'$ allow feature learning in high-dimensional spaces.

We can extend trainable convolution to the graph domain by replacing the cyclic-permutation matrix with the graph adjacency matrix. Similar generalizations have been explored in~\cite{GamaMLR:19}. Define $\X = \begin{bmatrix}
\x^{(1)} & \x^{(2)} & \ldots & \x^{(K)} \end{bmatrix} \in \R^{N \times K}$ as a $K$-channel graph signal; in other words, there are $K$ features at each vertex. Let a three-mode tensor $\Hh \in \R^{L \times K \times K'}$ be a collection of trainable graph filter coefficients that takes a $K$-channel graph signal as input and outputs a $K'$-channel graph signal. A trainable graph convolution is 
$$\Y = \Hh *_w \X,$$ with the $k'$th output channel
\begin{equation}
\label{eq:neural_graph_convolution}
\y^{(k')} \ = \ \sum_{\ell = 1}^L \sum_{k=1}^K \Hh_{\ell,k,k'} \Adj^{\ell} \x^{(k)},
\end{equation}
where the response $\Y$ is a  $N \times K'$ matrix, $\y^{(k')}$ is the $k'$th column of $\Y$ and $ \Hh_{\ell,k,k'}$ is trained during learning. We use the symbol $*_w$, instead of $*_v$, to emphasize that the filter coefficients in the proposed graph convolution~\eqref{eq:neural_graph_convolution} are trained in end-to-end learning while the filter coefficients in~\eqref{eq:graph_convolution} are manually designed. This graph convolution~\eqref{eq:neural_graph_convolution} is analogous to conventional convolution~\eqref{eq:neural_convolution}; that is, the output signal coefficient at each vertex is a weighted average of the signal coefficients at neighboring vertices. At the same time,~\eqref{eq:neural_graph_convolution} is a multi-channel extension of~\eqref{eq:graph_convolution}: when $K=K'=1$,~\eqref{eq:neural_graph_convolution} degenerates to~\eqref{eq:graph_convolution}. To make the notation consistent, when $K=K'=1$, we still use
$
\y \ = \ \Hh *_w \x
= \sum_{\ell = 1}^L \Hh_{\ell,1,1} \Adj^{\ell} \x,
$ where $\Hh \in \R^{L \times 1 \times 1}$.

An equivalent representation of~\eqref{eq:neural_graph_convolution} is, 
\begin{equation*}
\Y  \ = \ \sum_{\ell=1}^L \Adj^{\ell} \X \HH^{(\ell)},
\end{equation*}
where $\HH^{(\ell)}$ is a $K \times K'$ trainable matrix with $\HH^{(\ell)}_{k,k'} = \Hh_{\ell,k,k'}$.  Through multiplying with $\HH^{(\ell)}$, we transform $\X$ to a high-dimensional feature space.  After that, we diffuse the new features over the vertex domain according to the graph adjacency matrix $\Adj$. A special case is when $L=1$, in which case~\eqref{eq:neural_graph_convolution} degenerates to
\begin{equation}
\label{eq:neural_graph_convolution_matrix}
\Y  \ = \ \Adj \X \HH.
\end{equation}
This graph convolution is actively used in semi-supervised vertex classification~\cite{KipfW:17}; see theoretical comparisons between~\eqref{eq:neural_graph_convolution_matrix} and~\eqref{eq:neural_graph_convolution} in Section~\ref{sec:graph_convolution_theory}.

\subsection{Weight-sharing mechanism}
Previously, we obtained graph convolution through a mathematical analogy with standard convolution. However, the definitions of the neighborhoods are clearly different in the conventional convolution~\eqref{eq:neural_convolution}
and graph convolution~\eqref{eq:neural_graph_convolution}. For a time-series, each shift order $\ell$ introduces one neighbor for each time stamp. Given a cyclic convolution of length $L$, each time stamp has $L$ distinct neighbors and is associated with $L$ corresponding filter coefficients. On the other hand, for a graph signal, each graph shift order $\ell$ might introduce zero, one or multiple neighbors for each vertex. The number of neighbors depends on the local graph structure. Given a graph convolution of length $L$, those $L$ filter coefficients are insufficient to reflect distinct weights for all the neighbors.  This difference in the neighborhood definition distinguishes~\eqref{eq:neural_convolution} and~\eqref{eq:neural_graph_convolution} since a vertex cannot adjust the contribution from each of its neighbors individually. Here we propose a new graph convolution to fill this gap.

To make the graph convolution more flexible and powerful, we consider updating the edge weights in the given graph adjacency matrix. In this way, each vertex will have different impacts on its neighbors, just like in conventional convolution. A straightforward approach is to introduce a mask matrix $\Psi^{(\ell,k,k')} \in \R^{N \times N}$, expanding a single filter coefficient $\Hh_{\ell,k,k'}$ in~\eqref{eq:neural_graph_convolution} to a matrix of coefficients~\cite{IsufiGR:20}.  We then use $\Psi^{(\ell,k,k')} \odot \Adj^\ell$ to replace $\Hh_{\ell,k,k'} \Adj^\ell$ in~\eqref{eq:neural_graph_convolution}. However, when the graph size, $N$, is large, training $O(N^2)$ parameters in $\Psi^{(\ell,k,k')}$ is computationally difficult.

To reduce the number of training parameters, we  allow all the edges to share the same set of weights, which is similar to the weight-sharing mechanism in conventional convolution~\cite{Goodfellow:2016}. We aim to design a kernel function to parameterize each edge weight. For example, for 2D convolution in image processing, the key is to use a local kernel function to map the relative difference between two pixel coordinates to a weight. For graphs, we assign a coordinate to each vertex and use a local kernel function to map the relative difference between two vertex coordinates to an edge weight.

\begin{figure}[thb]
  \begin{center}
    \begin{tabular}{cc}
   \includegraphics[width=0.3\columnwidth]{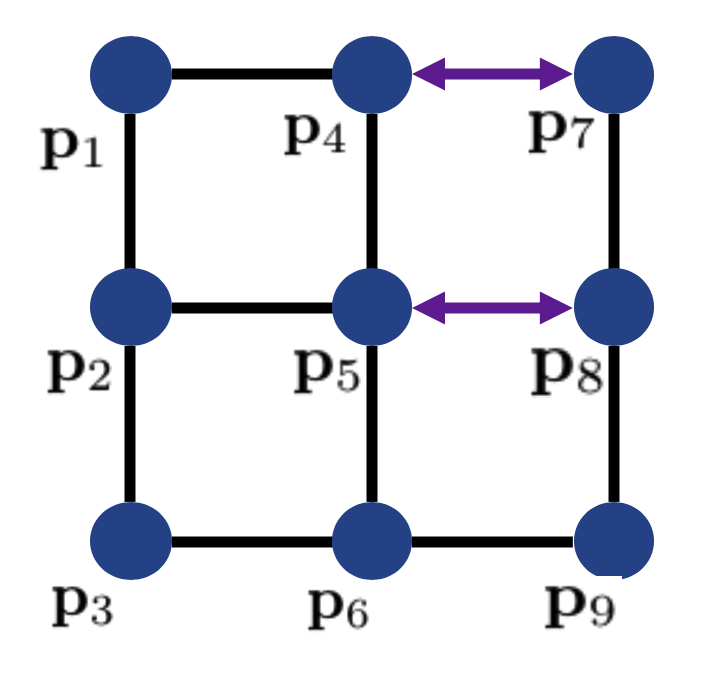} &      \includegraphics[width=0.6\columnwidth]{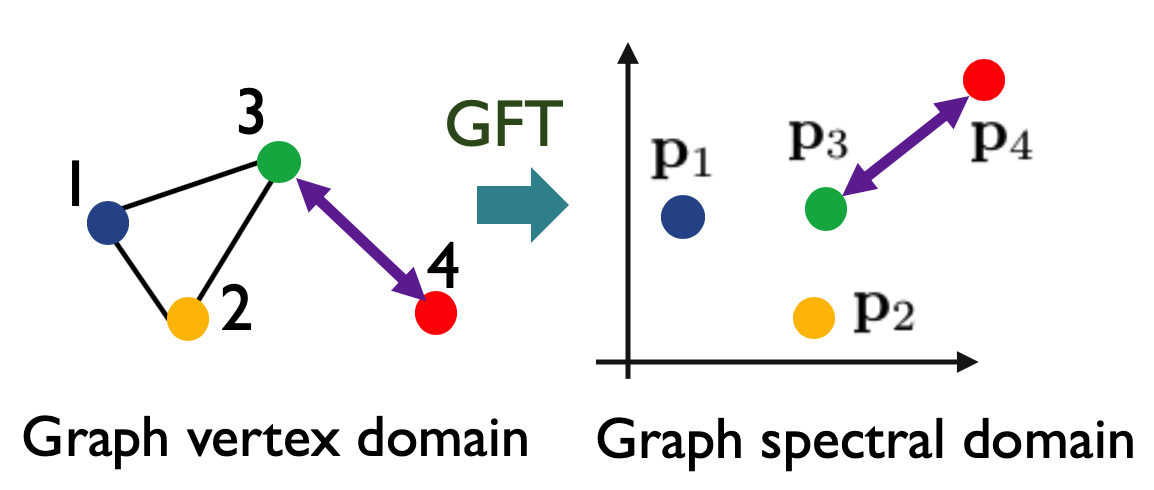}
    \\
    {\small (a) Pixel coordinates.} &  {\small  (b) Vertex coordinates.}
  \end{tabular}
\end{center}
\caption{\label{fig:edge_weight} Vertex coordinates and edge weights in images and graphs. Plot (a) shows a 2D image, where each pixel natually has a pixel coordinate on a 2D lattice. A weight in a 2D convolution is determined by the relative difference between a pair of pixel coordinates. For example, the weight between the $4$th and the $7$th pixels is equal to the weight between the $5$th and the $8$th pixels because the relative pixel coordinates are the same; that is, $\psi_w(\p_4 - \p_7) = \psi_w(\p_5 - \p_8)$; Plot (b) shows a irregular graph, where each vertex can be mapped to a vertex coordinate in the graph spectral domain through the graph Fourier transform (GFT). An edge weight in the proposed graph convolution is determined by the relative difference between a pair of vertex coordinates~\eqref{eq:edge_weight}.}
\end{figure}

The vertex coordinates can be obtained through the graph Fourier transform~\cite{SandryhailaM:13}. Let the eigendecomposition of the graph adjacency matrix  be
$
\Adj \ = \ \Vm \Lambda \Vm^{T},
$
where $\Lambda = {\rm diag}(\lambda_1,\ldots, \lambda_{N})$ is a diagonal matrix of $N$ eigenvalues and $\Vm$ is the matrix of corresponding eigenvectors. The eigenvalues of $\Adj$ represent the~\emph{graph frequencies} and the eigenvectors form the~\emph{graph Fourier basis}. The coordinate of the $i$th vertex is the row vector of the truncated graph Fourier basis, $\p_i = \begin{bmatrix} \Vm_{i,1} & \Vm_{i,2}  &\ldots & \Vm_{i,p}  \end{bmatrix}^T \in \R^p$, where $p \leq N$ is a hyperparameter. Through the graph Fourier transform, we map the information of each vertex from 
the graph vertex domain to the graph spectral domain.

Next, we assume that the edge weight between the $i$th and the $j$th vertices is parameterized by a kernel function:
\begin{equation}
\label{eq:edge_weight}
\Psi_{i,j} = \psi_w \left([\p_j - \p_i] \right) \in \R,
\end{equation}
where $\psi_w(\cdot)$ is a trainable function, which can be implemented by a multilayer perceptron (MLP)~\cite{Goodfellow:2016}. Given a pair of vertex coordinates, we convert their relative coordinate difference to a scalar that reflects the corresponding edge weight. In 2D convolution for images, the convolution kernel function is independent of the absolute pixel coordinates. Specifically, when we set the origin of the kernel function to a pixel, the weight from each of the pixel's neighbors depends on the relative coordinate difference between a neighboring pixel and the origin. Similarly, here we use the relative coordinate difference as the input because it allows the kernel function $\psi_w(\cdot)$ to be irreverent to the exact coordinate and to be applied to arbitrary edges. In other words, the proposed graph convolution is equivalent to the continuous convolution operated in the graph spectral domain. We name this operation the edge-weight-sharing graph convolution because all the edge weights share the same kernel function with the same training parameters; see Fig.~\ref{fig:edge_weight}.

Note that a few previous works also consider learning edge weights. For example, EdgeNet considers each edge weight as an independent trainable parameter~\cite{IsufiGR:20}. However, the number of trainable parameters depends on the graph size, which is computationally expensive. Graph attention networks learn edge weights through the attention mechanism. Each edge weight is parameterized by a kernel function, whose inputs are graph signals~\cite{VelivckovicCCRLB:18}. Here we consider a different parameterization: the input of a kernel function 
is the difference between a pair of vertex coordinates, which relies on the graph structure and is independent of the graph signals. This approach leverages the graph spectral information, which fuses both global and local information on graphs. The number of trainable parameters depends on the kernel function and is independent of the graph size.

We finally propose the ~\emph{edge-weight-sharing graph convolution (EWS-GC)} as $$
\Y = \Hh *_{a} \X
$$ 
with $k'$th output channel
\begin{equation}
\label{eq:weight_sharing_graph_convolution}
\y^{(k')}  \ = \ \sum_{\ell=1}^L \sum_{k=1}^K \left( \Psi^{(\ell,k,k')} \odot \Adj^\ell \right) \x^{(k)} \in \R^N,
\end{equation} 
where the response $\Y$ is a  $N \times K'$ matrix, $\Psi^{(\ell,k,k')} \in \R^{N \times N}$ is an edge-weight matrix whose elements are trainable and follow from~\eqref{eq:edge_weight}. Here the graph filter coefficients form a five-mode tensor $\Hh \in \R^{L \times K \times K' \times N  \times N }$  with $\Hh_{\ell,k,k',i,j} = \Psi^{(\ell,k,k')}_{i,j}$. In the implementation of $\Psi^{(\ell,k,k')}$, we only need to compute those entries  whose corresponding entries are nonzero in $\Adj^{\ell}$. The edge-weight-sharing graph convolution still relies on the given graph structure to propagate information, but it has flexibility to adjust the edge weights. Note that we use the symbol $*_a$, instead of $*_w$, to emphasize that~\eqref{eq:weight_sharing_graph_convolution} works with trainable edge weights while~\eqref{eq:neural_graph_convolution} assumes fixed edge weights.

We can represent~\eqref{eq:weight_sharing_graph_convolution} from another perspective. Let $\E^{(\ell)}$ be the edge set associated with the polynomial of the graph adjacency matrix $\Adj^{\ell}$. For an arbitrary edge $e = (v_i,v_j) \in \E^{(\ell)}$, its indicating matrix $\delta_e$ is defined as a $N \times N$ matrix whose elements are
\begin{equation*}
 (\delta_e)_{i',j'} = 
  \left\{ 
    \begin{array}{rl}
      \Adj^{\ell}_{i',j'}, & \mbox{if}~e = (v_{i'},v_{j'});\\
      0, & \mbox{otherwise}.
  \end{array} \right.
\end{equation*}
The subscript $e$ indicates an edge and $\delta_e$ is a one-hot matrix, only activating the element specified by the edge $e$. This edge is associated with a $K \times K'$ trainable matrix $\HH^{(e)}$ with elements $\HH^{(e)}_{k,k'} = \Hh_{\ell,k,k',i,j}$. The equivalent representation of~\eqref{eq:weight_sharing_graph_convolution} is\footnote{Equation~\eqref{eq:weight_sharing_graph_convolution_mtx}  suggests a randomized implementation 
of the edge-weight-sharing graph convolution. Instead of using the entire edge sets, we can randomly sample a subset of edges and approximate the exact value of~\eqref{eq:weight_sharing_graph_convolution_mtx} as $\Y  \ \approx \  \sum_{e \in \M} \delta_{e} \X \HH^{(e)}$. The edge set $\M \subset  \E^{(1)} \cup \E^{(2)} \cup \cdots \cup \E^{(L)}$ is obtained through edge sampling, which can be implemented via random walks~\cite{PerozziAS:14}.}
\begin{equation}
\label{eq:weight_sharing_graph_convolution_mtx}
\Y  \ = \ \sum_{\ell=1}^L \sum_{e \in \E^{(\ell)}} \delta_{e} \X \HH^{(e)},
\end{equation}
where $\Adj^{\ell} = \sum_{e \in \E^{(\ell)}} \delta_e$. The first summation considers all edge sets and the second summation considers all edges. We parse the entire graph to a collection of edges and the effect of each edge is reflected through the corresponding trainable matrix $\HH^{(e)}$.

The edge-weight-sharing graph convolution~\eqref{eq:weight_sharing_graph_convolution} is a specific type of edge-variant graph convolution, which leverages the weight-sharing mechanism to significantly reduce the number of trainable parameters. Indeed, \eqref{eq:neural_graph_convolution} is a special case of~\eqref{eq:weight_sharing_graph_convolution} when all elements in each $\Psi^{(\ell,k,k')}$ have the same value. Similarly,~\eqref{eq:neural_graph_convolution_matrix} is a special case of~\eqref{eq:neural_graph_convolution} when the filter length $L=1$.

We can apply the proposed edge-weight-sharing graph convolution~\eqref{eq:weight_sharing_graph_convolution} as a substitute to conventional graph filtering in the denoising problem~\eqref{eq:graph_signal_denoising_optimization}. In the next section, we use~\eqref{eq:weight_sharing_graph_convolution} as a building block for graph neural networks. Its associated trainable parameters will be updated in an end-to-end learning process.

\subsection{Analysis of graph convolution}
\label{sec:graph_convolution_theory}
Here we provide theoretical analysis of graph convolutions. We first show a drawback of simple graph convolution~\eqref{eq:neural_graph_convolution_matrix} and then discuss the benefit brought by the graph convolution~\eqref{eq:neural_graph_convolution} and the edge-weight-sharing graph convolution~\eqref{eq:weight_sharing_graph_convolution}.

\begin{myThm}
\label{thm:graph_convolution_1}
Let $\X$ be a matrix of $K$ graph signals. Let $\HH *_w \X = \Adj \X \HH$~\eqref{eq:neural_graph_convolution_matrix} be a graph convolution, where $\HH$ is a $N \times K$ trainable matrix. Let $\Y = \lim_{t \rightarrow +\infty} \HH^{[t]} *_w \cdots \HH^{[2]} *_w \HH^{[1]} *_w \X$ be the matrix of $K$ output graph signals after applying the graph convolution infinitely many times. Suppose $|\lambda_{\rm max}(\Adj)| \neq |\lambda_{\rm min}(\Adj)|$. Then, $$\text{rank}(\Y)=1,$$ independent of $\HH^{[t]}$.
\end{myThm}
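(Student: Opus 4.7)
My approach is to unroll the composition into a single closed form and then use the fact that high powers of the normalized $\Adj$ project onto a one-dimensional subspace.

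First I would observe that each layer acts as $\HH *_w (\cdot) = \Adj\,(\cdot)\,\HH$, so a $t$-fold composition telescopes: one factor of $\Adj$ is pushed onto the left and one factor of $\HH^{[s]}$ onto the right at every step, giving
\[
\HH^{[t]} *_w \cdots *_w \HH^{[1]} *_w \X \;=\; \Adj^{t}\,\X\,\HH^{[1]}\HH^{[2]}\cdots\HH^{[t]}.
\]
In particular the column space of the $t$-th iterate is contained in the column space of $\Adj^{t}$, so the problem reduces to understanding the limiting column space of $\Adj^{t}$ alone.

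Next I would diagonalize. Because the graph is undirected, $\Adj = \sum_{i=1}^{N}\lambda_i\vv_i\vv_i^{T}$ is symmetric with real eigenvalues and orthonormal eigenvectors, and the normalization in Section~\ref{sec:formulation} fixes the spectral radius at $1$. I read the hypothesis $|\lambda_{\max}(\Adj)|\neq|\lambda_{\min}(\Adj)|$ as excluding the bipartite-type configurations in which the spectral radius is attained simultaneously by $+1$ and $-1$. Under this hypothesis a unique eigenvalue, say $\lambda_1$, satisfies $|\lambda_1|=1$ while $|\lambda_i|<1$ for $i\geq 2$. Splitting $\Adj = \lambda_1\,\vv_1\vv_1^{T} + \RR$ with $\RR$ the restriction to $\vv_1^{\perp}$, orthogonality of the two summands gives $\Adj^{t} = \lambda_1^{t}\,\vv_1\vv_1^{T} + \RR^{t}$ with $\|\RR^{t}\|_2\to 0$ geometrically.

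Substituting back, the $t$-th iterate decomposes as $\lambda_1^{t}\,\vv_1\,(\vv_1^{T}\X\,\HH^{[1]}\cdots\HH^{[t]}) \,+\, \RR^{t}\,\X\,\HH^{[1]}\cdots\HH^{[t]}$. The first summand is always an outer product of $\vv_1$ with a row vector, hence rank at most one independently of the $\HH^{[s]}$; the second summand is driven to zero by the spectral gap. Therefore $\Y$ lies in $\mathrm{span}(\vv_1)$ and $\text{rank}(\Y)\leq 1$, with equality whenever the limiting row vector is nonzero. The main obstacle I anticipate is rigorously controlling the residual when the trainable product $\prod_{s}\HH^{[s]}$ is allowed to blow up in norm: if its spectral norm grows faster than $\|\RR\|_2^{-t}$, the residual term need not vanish. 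I would handle this either by normalizing each iterate by its spectral norm before passing to the limit, or by restricting to bounded weight sequences, both of which leave the rank-one conclusion intact and are therefore ``independent of $\HH^{[t]}$'' as claimed.
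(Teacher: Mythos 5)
Your proposal follows essentially the same route as the paper's proof: unrolling the composition to $\Adj^{t}\,\X\,\HH^{[1]}\cdots\HH^{[t]}$ and then invoking the eigendecomposition of the normalized $\Adj$ so that only the eigendirection of modulus-one eigenvalue survives, your rank-one projector splitting being just a repackaging of the paper's diagonalization $\Vm\Lambda^{t}\Vm^{T}$. The residual-control caveat you flag (the weight product possibly blowing up faster than the subdominant powers decay) is a genuine subtlety that the paper's own proof silently glosses over, but it does not change the shared structure of the argument.
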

\begin{proof} 
We can rewrite the output matrix as
\begin{eqnarray*}
\Y & = & \lim_{t \rightarrow +\infty} \HH^{[t]} *_w \cdots \HH^{[2]} *_w \HH^{[1]} *_w \X
\\
& = & \lim_{t \rightarrow +\infty}
\Adj^{t} \X \HH^{[1]} \HH^{[2]} \cdots \HH^{[t]}
\\
& = & \Vm \lim_{t \rightarrow +\infty}
 \Lambda^t \Vm^T \X \HH^{[1]} \HH^{[2]} \cdots \HH^{[t]},
\end{eqnarray*}
where $\Vm$ and $\Lambda$ are the eigenvector matrix and eigenvalue matrix of the graph shift, respectively. Let the eigenvalues along the diagonal of $\Lambda$ be descendingly ordered. Because of the normalization of the graph shift, the eigenvalues of $\Adj$ satisfy $1 \geq  \lambda_1 \geq \lambda_2 \geq \cdots \geq \lambda_N \geq -1$. Since $|\lambda_{\rm max}(\Adj)| \neq |\lambda_{\rm min}(\Adj)|$, we have either $\lambda_1 = 1$ or $\lambda_N = -1$. Without losing generality, we assume $\lambda_1 = 1$ and $\lambda_N > -1$. Denote $\Z^{[t]} = \Vm^T \X \HH^{[1]} \HH^{[2]} \cdots \HH^{[t]}$ for any given $t$. Then, since $\lambda_1 = 1$ and $\lambda_i^t \rightarrow 0$ for $i \neq 1$, we have
\begin{equation*}
  \Y \ = \  \vv_1
\begin{bmatrix}
\Z^{[t]}_{11} &  \Z^{[t]}_{12} & \cdots &  \Z^{[t]}_{1N}
\end{bmatrix}. 
\end{equation*}
This concludes that the rank of $\Y$ is equal to one. 
\end{proof}
Theorem~\ref{thm:graph_convolution_1} shows that the graph convolution~\eqref{eq:neural_graph_convolution_matrix} can lead to a trivial output even with a huge amount of training parameters. The output graph signal after infinite-time graph convolution will always be proportional to the eigenvector associated with either the largest or the smallest eigenvalue. This indicates the limited power of~\eqref{eq:neural_graph_convolution_matrix}. We next show that the proposed graph convolution~\eqref{eq:neural_graph_convolution} does not suffer from this issue.

\begin{myThm}
\label{thm:graph_convolution_2}
Let $\X$ be a matrix of $K$ graph signals. Let $\Hh *_w \X = \sum_{\ell=1}^L \Adj^{\ell} \X \HH^{(\ell)}$~\eqref{eq:neural_graph_convolution_matrix} be a graph convolution, where $\HH^{(\ell)}$ is a $K \times K$ trainable matrix. Let $\Y = \lim_{t \rightarrow +\infty} \Hh^{[t]} *_w \cdots \Hh^{[2]} *_w \Hh^{[1]} *_w \X$ be the matrix of $K$ output graph signals after graph convolution infinite times. Suppose that (i) the filter length $L \geq Q$, where $Q$ is the number of distinct eigenvalues of the graph shift $\Adj$, and (ii) none of the eigenvalues of $\Adj$ is equal to zero.
Then, $$\text{rank}(\Y) = \text{rank}(\X),$$ with a careful design of $\Hh^{[t]}$.
\end{myThm}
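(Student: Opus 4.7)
The plan is to exhibit an explicit choice of filter tensors $\Hh^{[1]}, \Hh^{[2]}, \ldots$ under which each layer acts as the identity map on $\X$, so that the limit $\Y$ equals $\X$ itself and its rank equals $\text{rank}(\X)$ trivially. Because the theorem is existential (``careful design of $\Hh^{[t]}$''), producing one valid sequence suffices.

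First, I would restrict attention to the subclass of filters whose channel-mixing matrices are scalar multiples of the identity, $\HH^{(\ell)} = c_\ell \Id$, for scalars $c_1, \ldots, c_L \in \R$. Under this choice the graph convolution collapses to
\begin{equation*}
\Hh *_w \X \ = \ \left( \sum_{\ell=1}^{L} c_\ell \Adj^\ell \right) \X \ = \ p(\Adj)\, \X,
\end{equation*}
where $p(\lambda) = \sum_{\ell=1}^{L} c_\ell \lambda^\ell$ is a polynomial with \emph{no constant term}. The problem then reduces to picking $c_1, \ldots, c_L$ so that $p(\Adj) = \Id$; once this is achieved, every layer of the composition acts trivially, the limit is $\Y = \X$, and the rank conclusion is immediate.

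Next, using the eigendecomposition $\Adj = \Vm \Lambda \Vm^T$, the matrix equation $p(\Adj) = \Id$ is equivalent to the $Q$ scalar interpolation conditions $p(\mu_i) = 1$, where $\mu_1, \ldots, \mu_Q$ are the distinct eigenvalues of $\Adj$. Writing $p(\lambda) = \lambda\, q(\lambda)$ with $q(\lambda) = \sum_{\ell=1}^{L} c_\ell \lambda^{\ell-1}$, these become $q(\mu_i) = 1/\mu_i$ for $i = 1, \ldots, Q$. Both hypotheses enter here in an essential way: assumption (ii), $\mu_i \neq 0$, makes $1/\mu_i$ well defined and, equivalently, prevents the forbidden constraint $p(0) = 1$ (since $p$ has no constant term, it is forced to satisfy $p(0) = 0$); assumption (i), $L \geq Q$, supplies at least as many free coefficients in $q$ as there are interpolation nodes.

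Finally, because the $Q$ nodes $\mu_1, \ldots, \mu_Q$ are distinct and $q$ has $L \geq Q$ free coefficients, standard Lagrange interpolation (or a Vandermonde-style argument) produces a valid vector $(c_1, \ldots, c_L)$; setting $\HH^{(\ell),[t]} = c_\ell \Id$ for every layer $t$ finishes the construction. The main conceptual step, and the one that most warrants care, is recognizing that the two hypotheses precisely encode feasibility of the polynomial interpolation under the zero-constant-term constraint inherited from the filter form; once this is noticed, the argument reduces to a short existence claim for an interpolation polynomial, with no further technicalities.
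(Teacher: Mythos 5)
Your proposal is correct and follows essentially the same route as the paper: construct filter coefficients so that the polynomial $\sum_\ell c_\ell \lambda^\ell$ equals $1$ on every (distinct, nonzero) eigenvalue, which is exactly the diagonal-times-Vandermonde system the paper solves, with hypotheses (i) and (ii) used in the same way. Your specialization $\HH^{(\ell)} = c_\ell \Id$ (versus the paper's $h_\ell \HH$ with $\HH$ full rank) is a harmless simplification that in fact makes each layer exactly the identity, so the infinite composition converges trivially to $\X$.
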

\begin{proof}
We consider a constructive proof; that is, we design a specific $\Hh^{[t]}$ to make $\text{rank}(\Y) = \text{rank}(\X)$. Without losing generality, in $\Hh^{[1]} *_w \X$, we set the trainable matrix $\HH^{(\ell)} = h_{\ell} \HH$, where $\HH$ is a $K \times K$ full-rank matrix and $h_{\ell}$ is a scalar variable. Then, 
\begin{eqnarray*}
&& \sum_{\ell=1}^L \Adj^{\ell} \X \HH^{(\ell)} \ = \
\sum_{\ell=1}^L h_{\ell} \Adj^{\ell} \X \HH
\\
& = &
\Vm \begin{bmatrix}
\sum_{\ell=1}^L h_{\ell} \lambda_1^{\ell} & 0 & \cdots & 0 \\
0 & \sum_{\ell=1}^L h_{\ell} \lambda_2^{\ell} & \cdots & 0 \\
\vdots & \vdots & \ddots & \vdots \\ 
0 & 0 & \cdots & \sum_{\ell=1}^L h_{\ell} \lambda_N^{\ell}
\end{bmatrix}  \Vm^T \X \HH.
\end{eqnarray*}
We aim to design $h_{\ell}$ to satisfy $\sum_{\ell=1}^L h_{\ell} \lambda_i^{\ell} = 1$ for all $i$. Without losing generality, let $\lambda_1, \lambda_2, \cdots, \lambda_Q$ be $Q$ distinct and nonzero eigenvalues of $\Adj$. We then need to solve
\begin{equation*}
\begin{bmatrix}
\lambda_1 & 0 & \cdots & 0 \\
0 & \lambda_2 & \cdots & 0 \\
\vdots & \vdots & \ddots & \vdots \\ 
0 & 0 & \cdots & \lambda_Q \\
\end{bmatrix}  
\begin{bmatrix}
1 & \lambda_1 & \cdots & \lambda_1^{L-1} \\
1 & \lambda_2 & \cdots & \lambda_2^{L-1} \\
\vdots & \vdots & \ddots & \vdots \\ 
1 & \lambda_N & \cdots & \lambda_Q^{L-1} \\
\end{bmatrix} 
\begin{bmatrix}
h_1 \\ h_2  \\ \vdots \\ h_L
\end{bmatrix} = 
\begin{bmatrix}
1 \\
1  \\
\vdots \\ 
1
\end{bmatrix} .
\end{equation*}
The diagonal matrix has full rank as all the elements are  nonzero; the Vandermonde matrix also has full row rank as $L \geq Q$ and all the $\lambda_i$'s are distinct. We thus can  design $h_{\ell}$ to achieve equality and then $\sum_{\ell=1}^L \Adj^{\ell} \X \HH^{(\ell)} = \X \HH$. Since $\HH$ is a full-rank square matrix, $\text{rank}(\Hh^{[1]} *_w \X) = \text{rank}(\sum_{\ell=1}^L \Adj^{\ell} \X \HH^{(\ell)}) =  \text{rank}(\X \HH) = \text{rank}(\X)$; in other words, it is possible to design $\Hh^{[1]}$ to allow the output after each graph convolution to have the same rank as the input. We can apply the same technique arbitrarily many times to achieve $\text{rank}(\Y) = \text{rank}(\X)$.
\end{proof}
We see that the graph convolution~\eqref{eq:neural_graph_convolution} is much more powerful than its special case~\eqref{eq:neural_graph_convolution_matrix}. However,~\eqref{eq:neural_graph_convolution} also has its own limitation.

\begin{myThm}
\label{thm:graph_convolution_3}
Let $\X = {\bf 1}_N \x^T$ be a matrix of $K$ constant graph signals, where ${\bf 1}_N$ is a $N$-dimensional all-one vector and $\x$ is a $K$-dimensional vector. Let $\Hh *_w \X = \sum_{\ell=1}^L \Adj^{\ell} \X \HH^{(\ell)}$ be the graph convolution, where $\HH^{(\ell)}$ is a $N \times K$ matrix. Let $\Adj = \Id$ be a self-loop graph. Then, the output after graph convolution is always a constant graph signal.
\end{myThm}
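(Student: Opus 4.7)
The plan is to carry out a direct matrix calculation that exploits the triviality of powers of the identity. First I would note that when $\Adj = \Id$, every power satisfies $\Adj^{\ell} = \Id$ for all $\ell \geq 1$, so the graph-shift component of the convolution acts trivially and the formula for $\Hh *_w \X$ collapses to a pure channel-mixing operation. Concretely, substituting gives
\begin{equation*}
\Hh *_w \X \;=\; \sum_{\ell=1}^L \Adj^{\ell} \X \HH^{(\ell)} \;=\; \X \sum_{\ell=1}^L \HH^{(\ell)}.
\end{equation*}

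Next I would plug in the assumed structure $\X = {\bf 1}_N \x^T$ to get
\begin{equation*}
\Hh *_w \X \;=\; {\bf 1}_N \, \x^T \sum_{\ell=1}^L \HH^{(\ell)} \;=\; {\bf 1}_N \y^T,
\end{equation*}
where $\y = \bigl(\sum_{\ell=1}^L \HH^{(\ell)}\bigr)^T \x$. Since every column of ${\bf 1}_N \y^T$ is a scalar multiple of ${\bf 1}_N$, each output channel is constant over the vertex set, which is exactly the claim. To finish, I would observe by induction that applying the convolution any number of times preserves the rank-one factorization ${\bf 1}_N (\cdot)^T$, so the conclusion is stable under repeated application if needed.

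I do not expect any real obstacle: because $\Adj = \Id$ commutes with everything and because ${\bf 1}_N$ is the unique direction that survives multiplication by $\Id$ on the left, the argument is essentially a one-line identity. The only care needed is to read the statement consistently with the rest of the paper, reconciling the dimension of $\HH^{(\ell)}$ so that $\X \HH^{(\ell)}$ is well defined; once that is fixed, the structural conclusion is immediate. Conceptually, the theorem is a negative/sanity result: it shows that the polynomial-of-$\Adj$ graph convolution in~\eqref{eq:neural_graph_convolution_matrix} cannot escape the constant subspace on a self-loop graph, which is precisely what motivates replacing each scalar coefficient with a trainable edge-weight matrix $\Psi^{(\ell,k,k')}$ in the edge-weight-sharing graph convolution~\eqref{eq:weight_sharing_graph_convolution}.
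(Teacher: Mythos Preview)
Your proposal is correct and follows essentially the same approach as the paper: both use $\Adj^{\ell}=\Id$ to collapse the convolution to $\X\sum_{\ell}\HH^{(\ell)}$ and then factor out ${\bf 1}_N$ from $\X={\bf 1}_N\x^T$. Your version is slightly more detailed (explicitly naming $\y$ and noting the induction for repeated application), and your remark about reconciling the stated dimension of $\HH^{(\ell)}$ is apt, since the theorem's ``$N\times K$'' appears to be a typo for $K\times K'$.
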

\begin{proof}
Since the underlying graph is a self-loop graph, we have
\begin{equation*}
\Hh *_w \X \ = \ \sum_{\ell=1}^L \Id \X \HH^{(\ell)}
 =  {\bf 1}_N \left(\x^T \sum_{\ell=1}^L \HH^{(\ell)} \right).
\end{equation*}
We cannot change the constant graph signals through graph convolution~\eqref{eq:neural_graph_convolution}.
\end{proof}
We see that the graph convolution~\eqref{eq:neural_graph_convolution} is weak as it can only output constant graph signals when the input are constant graph signals. It is straightforward to show that the proposed edge-weight-sharing  graph convolution~\eqref{eq:weight_sharing_graph_convolution_mtx} can fix this issue by adjusting each edge weight. For a self-loop graph $\Adj = \Id$,~\eqref{eq:weight_sharing_graph_convolution_mtx} is then
\begin{eqnarray*}
\sum_{\ell=1}^L \sum_{e \in \E^{(\ell)}} \delta_{e} \X \HH^{(e)} 
= {\rm diag}(\ee) {\bf 1}_N \left(\x^T \sum_{\ell} \HH^{(\ell)}\right)
= \ee \left(\x^T \sum_{\ell} \HH^{(\ell)}\right),
\end{eqnarray*}
where $\ee$ is a $N$-dimensional vector, which can be trained to reflect the edge weights. Therefore, we can change constant graph signals through~\eqref{eq:weight_sharing_graph_convolution_mtx}.

Finally, we consider the permutation-equivariant property of the edge-weight-sharing graph convolution~\eqref{eq:weight_sharing_graph_convolution_mtx}. Let $\J \in \R^{N \times N}$ be a permutation matrix. After permutation, a graph adjacency matrix $\Adj \in \R^{N \times N}$ and a $K$-channel graph signal $\X \in \R^{N \times K}$ become 
$\J \Adj \J^T$ and $\J \X$, respectively.
\begin{myThm}
The edge-weight-sharing graph convolution~\eqref{eq:weight_sharing_graph_convolution_mtx} is permutation equivariant.  Suppose that a kernel function $\psi_w(\cdot)$ is fixed and given. Then,
$$\J \left( \Hh *_{a} \X \right) = \Hh *_{a} \left(\J\X\right).$$
\end{myThm}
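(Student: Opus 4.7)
My plan is to use the matrix form~\eqref{eq:weight_sharing_graph_convolution} of the edge-weight-sharing graph convolution, track how each of its ingredients transforms under the permutation $\J$, and then assemble the pieces. Let $\pi$ denote the permutation associated with $\J$, so that $(\J\vv)_i = v_{\pi^{-1}(i)}$ for any vector $\vv$ and $\J_{i,j} = 1$ iff $i = \pi(j)$.

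First I would analyze the vertex coordinates from the graph Fourier transform. If $\Adj = \Vm \Lambda \Vm^T$, then the permuted adjacency $\J\Adj\J^T = (\J\Vm)\Lambda(\J\Vm)^T$ has the same eigenvalues and its eigenvector matrix is simply row-permuted, namely $(\J\Vm)_{i,:} = \Vm_{\pi^{-1}(i),:}$. Consequently the new vertex coordinates satisfy $\tilde{\p}_i = \p_{\pi^{-1}(i)}$, and since the kernel function $\psi_w$ is fixed by hypothesis the permuted edge-weight matrix defined by~\eqref{eq:edge_weight} satisfies
\begin{equation*}
\tilde{\Psi}^{(\ell,k,k')}_{i,j} = \psi_w(\tilde{\p}_j - \tilde{\p}_i) = \psi_w(\p_{\pi^{-1}(j)} - \p_{\pi^{-1}(i)}) = \Psi^{(\ell,k,k')}_{\pi^{-1}(i),\pi^{-1}(j)},
\end{equation*}
which is equivalent to the matrix identity $\tilde{\Psi}^{(\ell,k,k')} = \J\Psi^{(\ell,k,k')}\J^T$.

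Next I would combine this with two elementary facts: $(\J\Adj\J^T)^\ell = \J\Adj^\ell\J^T$ for all $\ell \geq 1$ (from $\J^T\J = \Id$), and $\J(A \odot B)\J^T = (\J A \J^T) \odot (\J B \J^T)$ for any $N \times N$ matrices $A, B$ (immediate by comparing entries, since a simultaneous row/column permutation acts identically on every entry). Together these give $\tilde{\Psi}^{(\ell,k,k')} \odot (\J\Adj\J^T)^\ell = \J\bigl(\Psi^{(\ell,k,k')} \odot \Adj^\ell\bigr)\J^T$. Inserting this into~\eqref{eq:weight_sharing_graph_convolution} applied on the permuted graph with input $\J\X$, the $k'$th output channel becomes
\begin{equation*}
\sum_{\ell=1}^L \sum_{k=1}^K \J\bigl(\Psi^{(\ell,k,k')} \odot \Adj^\ell\bigr)\J^T \J \x^{(k)} = \J \sum_{\ell=1}^L \sum_{k=1}^K \bigl(\Psi^{(\ell,k,k')} \odot \Adj^\ell\bigr)\x^{(k)} = \J \y^{(k')},
\end{equation*}
which is exactly the claimed equivariance; stacking over $k'$ yields $\Hh *_a (\J\X) = \J(\Hh *_a \X)$.

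The main obstacle I expect is the bookkeeping that links the permutation of the GFT basis to the induced relabeling of vertex coordinates; a related subtlety is that $\Vm$ is not uniquely determined inside eigenspaces of higher multiplicity. However, since the kernel $\psi_w$ is evaluated only on coordinate differences that transform covariantly with any consistent convention for choosing the GFT basis, the identity $\tilde{\Psi}^{(\ell,k,k')} = \J\Psi^{(\ell,k,k')}\J^T$ goes through regardless, and the rest of the argument is purely algebraic.
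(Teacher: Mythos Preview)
Your proposal is correct and follows essentially the same route as the paper: establish that permuting the graph sends $\Psi^{(\ell,k,k')}$ to $\J\Psi^{(\ell,k,k')}\J^T$, then push $\J$ through the Hadamard product and the powers of $\Adj$ in~\eqref{eq:weight_sharing_graph_convolution}. Your version is in fact more careful than the paper's, which simply asserts that the vertex coordinates permute accordingly without spelling out the eigendecomposition argument or the Hadamard identity, and does not comment on the eigenspace-multiplicity subtlety you flag.
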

\begin{proof}
We first show the effect of permutation on the trainable edge-weight matrix~\eqref{eq:edge_weight}. When we permute the graph structure, the vertex coordinates permute accordingly. Therefore, after permutation, a edge-weight matrix $\Psi^{(\ell,k,k')}$~\eqref{eq:edge_weight} becomes $\J \Psi^{(\ell,k,k')} \J^T$. The $k$th channel of $\Hh *_{a} \left(\J\X \right)$ becomes
\begin{eqnarray*}
&& \Big( \Hh *_{a}\left(\J\X \right) \Big)^{(k)} 
\\
& = & \sum_{\ell=1}^L \sum_{k=1}^K \left( \left(  \J \Psi^{(\ell,k,k')} \J^T \right) \odot \left(\J\Adj\J^T \right)^\ell \right) \J \x^{(k)}
\\
& = &  \J \sum_{\ell=1}^L \sum_{k=1}^K \left( \Psi^{(\ell,k,k')}  \odot \Adj^\ell \right) \x^{(k)}
\\
& = & \Big( \J  \left( \Hh *_{a} \X \right) \Big)^{(k)},
\end{eqnarray*}
which is the the $k$th channel of $\J \left( \Hh *_{a} \X \right)$.
\end{proof}
The permutation-equivariant property is important because it ensures that reordering of the vertices will not effect the training of the edge-weight-sharing graph convolution. With the new graph convolution operation, we now move to propose an interpretable framework for designing graph neural networks.

\section{Graph unrolling networks}
\label{sec:GUN}
In Section~\ref{sec:formulation}, we mathematically formulated the task of graph signal denoising~\eqref{eq:graph_signal_denoising_optimization}. In this section, we aim to solve~\eqref{eq:graph_signal_denoising_optimization} through algorithm unrolling and propose a general framework for developing graph unrolling networks. The core strategy is to follow iterative algorithms and then use the trainable edge-weight-sharing graph  convolution proposed in Section~\ref{sec:graph_convolution} to  substitute fixed graph filtering. We further consider two specific architectures of graph unrolling networks by unrolling graph sparse coding and graph trend filtering.

\subsection{General framework}

Consider a general iterative algorithm to solve the graph signal denoising problem~\eqref{eq:graph_signal_denoising_optimization} based on the half-quadratic splitting algorithm. The basic idea is to perform variable-splitting and then alternating minimization on the penalty function~\cite{WangYYZ:08,LiTME:19}.

Introduce two auxiliary variables $\y = \Pj \x$ and $\z = \Q \s$ and then reformulate~\eqref{eq:graph_signal_denoising_optimization} as
\begin{eqnarray*}
\label{eq:graph_signal_denoising_optimization_reform}
  && \min_{\s \in \R^N} \frac{1}{2} \left\|\t - \x \right\|_2^2 +  u( \y) + r( \z),
	\\ \nonumber
	&& {\rm subject~to~~} \x \ =  \  \h *_v \s,~~\y = \Pj \x,~~\z = \Q \s.
\end{eqnarray*}
The penalty function is
\begin{eqnarray*}
 L(\x, \s, \y, \z) & = &  \frac{1}{2} \left\|\t - \x \right\|_2^2 +  u( \y) + r( \z) +  \frac{\mu_1}{2} \left\| \x - \h *_v \s \right\|_2^2 \\
 &&   
 + \frac{\mu_2}{2} \left\| \y - \Pj \x \right\|_2^2 +  \frac{\mu_3}{2} \left\| \z - \Q \s \right\|_2^2,
\end{eqnarray*}
where $\mu_1, \mu_2, \mu_3$ are appropriate step sizes. We then alternately minimize $L(\x, \s, \y, \z)$ over $\x, \s, \y, \z$, leading to the following updates:
\begin{subequations}
\label{eq:general_admm}
\begin{eqnarray}
\label{eq:general_admm_x}
    \x & \leftarrow &  \widetilde{\Pj} \bigg(  \mu_1 \h *_v \s + \t + \mu_2 \Pj^T \y \bigg),
    \\
\label{eq:general_admm_s}
    \s & \leftarrow &  \widetilde{\Q} \bigg( \mu_1 \h *_v^T \x + \mu_3 \Q^T \z \bigg),
    \\
\label{eq:general_admm_y}
    \y & \leftarrow &  \arg \min_{\y} \frac{\mu_2}{2} \left\| \y - \Pj \x \right\|_2^2 +  u( \y),
    \\
\label{eq:general_admm_z}
    \z & \leftarrow &  \arg \min_{\z} \frac{\mu_3}{2} \left\| \z - \Q \s  \right\|_2^2 +  r( \z),
\end{eqnarray}
\end{subequations}
where $$\widetilde{\Pj} = (\Id+\mu_1\Id + \mu_2 \Pj^T \Pj )^{-1}$$ and 
$$\widetilde{\Q} = ( \mu_1 \sum_{\ell'=1}^L h_{\ell'} \Adj^{\ell'} \sum_{\ell=1}^L h_{\ell} \Adj^{\ell}  + \mu_3 \Q^T \Q )^{-1}.$$ 
Intuitively,~\eqref{eq:general_admm_x} denoises a graph signal by merging information from the original measurements $\t$, filtered graph signals $\h *_v \s$ and the auxiliary variable $\y$;~\eqref{eq:general_admm_s} generates a base graph signal through graph deconvolution; and~\eqref{eq:general_admm_y} and~\eqref{eq:general_admm_z} solve two proximal functions with regularization $u(\cdot)$ and $r(\cdot)$, respectively.

To unroll the iteration steps~\eqref{eq:general_admm}, we consider two major substitutions. First, we replace the fixed graph convolution by the trainable edge-weight-sharing  graph convolution~\eqref{eq:weight_sharing_graph_convolution}. Second, we replace the sub-optimization problems in~\eqref{eq:general_admm_y} and~\eqref{eq:general_admm_z} by a trainable neural network. 

A unrolling layer for denoising a graph signal is then,
\begin{subequations}
\label{eq:GUN_single}
\begin{eqnarray}
\label{eq:GUN_x}
\x & \leftarrow & 
\Aa *_a \s + \Bb *_a \t + \Cc *_a \left(\Pj^T \y\right),
\\
\label{eq:GUN_s}
\s & \leftarrow & \Dd *_a \x + \Ee *_a \left(\Q^T \z \right),
\\
\label{eq:GUN_y}
\y & \leftarrow & {\rm NN}_{u} \left(\Pj \x \right),
\\
\label{eq:GUN_z}
\z & \leftarrow & {\rm NN}_{r} \left(\Q \s \right),
\end{eqnarray}
\end{subequations}
where $\Aa *_a$, $\Bb *_a$, $\Cc *_a$, $\Dd *_a$, and $\Ee *_a$ are individual edge-weight-sharing  graph convolutions with filter coefficients that are trainable parameters, and ${\rm NN}_u(\cdot)$ and ${\rm NN}_r(\cdot)$ are two neural networks, which involve trainable parameters. Intuitively,~\eqref{eq:GUN_x} and~\eqref{eq:GUN_s} are neural-network implementations of~\eqref{eq:general_admm_x} and~\eqref{eq:general_admm_s}, respectively,  replacing fixed graph convolutions $\h *_v$ by  trainable edge-weight-sharing  graph convolutions~\eqref{eq:weight_sharing_graph_convolution}; and~\eqref{eq:GUN_y} and~\eqref{eq:GUN_z} are neural-network implementations of the proximal functions~\eqref{eq:general_admm_y} and~\eqref{eq:general_admm_z}, respectively, using neural networks to solve sub-optimization problems; see similar substitutions in~\cite{GregorL:10,SolomonCZYHLSE:18,LiTME:19,LiTGME:20}.

Instead of following the exact mathematical relationship in~\eqref{eq:general_admm}, we allow trainable operators to adaptively learn from data, usually reducing a lot of computation. The implementations of~\eqref{eq:GUN_y} and~\eqref{eq:GUN_z} depend on specific regularization terms, $u(\cdot)$ and $r(\cdot)$. For some $u(\cdot), r(\cdot)$, we might end up with an analytical form for~\eqref{eq:GUN_y} and~\eqref{eq:GUN_z}. We will show two examples in Sections~\ref{sec:graph_sparse_coding} and~\ref{sec:graph_trend_filtering}.

One characteristic of neural networks is to allow feature learning in a high-dimensional space. Instead of sticking to a single channel, we can easily extend~\eqref{eq:GUN_single} to handle multiple input noisy graph signals and enable multiple-channel feature learning. The corresponding $b$th unrolling layer of multi-channel graph signals is
\begin{eqnarray}
\label{eq:GUN_multiple_x}
\X^{(b)} & \leftarrow & 
\Aa *_a \Ss^{(b-1)}+\Bb *_a \T+\Cc *_a \left(\Pj^T \Y^{(b-1)} \right),
\nonumber \\
\label{eq:GUN_multiple_s}
\Ss^{(b)} & \leftarrow & \Dd *_a \X^{(b)}+\Ee *_a \left(\Q^T \Z^{(b-1)} \right),
\nonumber \\
\label{eq:GUN_multiple_y}
\Y^{(b)} & \leftarrow & {\rm NN}_{u} \left(\Pj \X^{(b)} \right),
\nonumber \\
\label{eq:GUN_multiple}
\Z^{(b)} & \leftarrow & {\rm NN}_{r} \left(\Q \Ss^{(b)} \right),
\end{eqnarray}
where $\T \in \R^{N \times K}$ is a matrix of $K$ noisy graph signals,  $\Z^{(b)} \in \R^{N \times d^{(b)}}$ is the intermediate feature matrix, and $\Ss^{(b)} \in \R^{N \times D^{(b)}}$ is the output matrix of the $b$th computational block. The feature dimensions $d^{(b)}, D^{(b)}$ are hyperparameters of the network; see a graph unrolling layer in Fig.~\ref{fig:generic_unrolling}.

\begin{figure}[thb]
  \begin{center}
   \includegraphics[width=1.00\columnwidth]{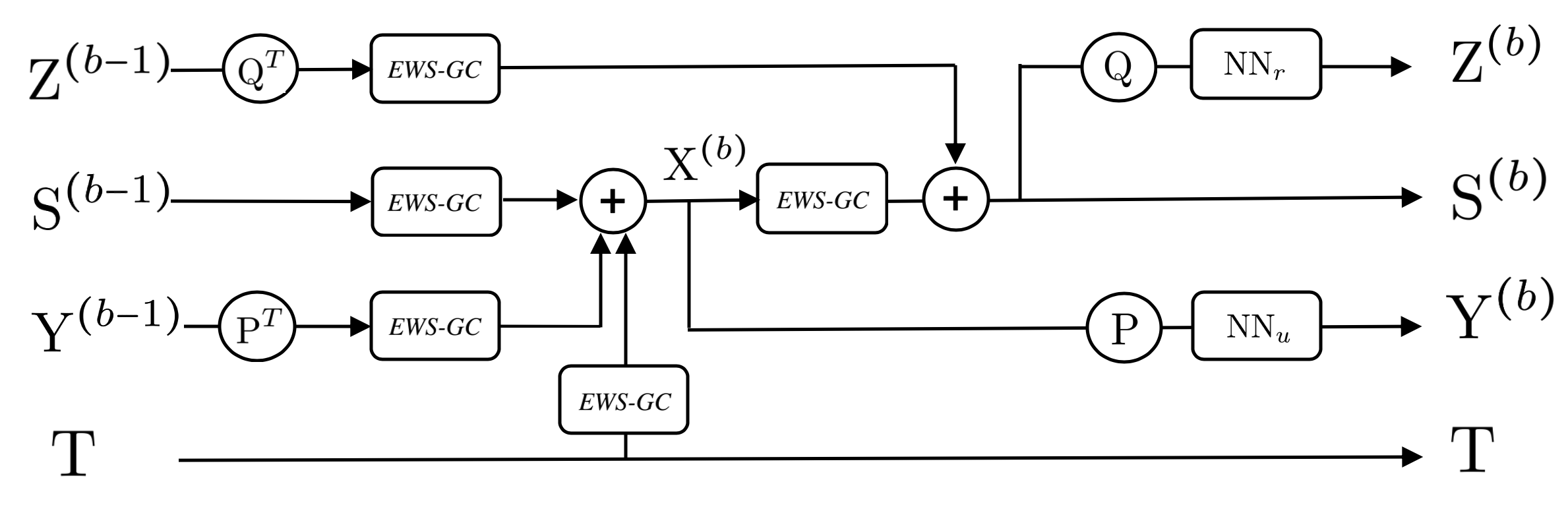}
\end{center}
\caption{\label{fig:generic_unrolling} A generic graph unrolling layer for graph signal denoising~\eqref{eq:GUN_multiple}, which is one computational block in a graph unrolling network. Given the raw measurement $\T$, the proposed unrolling layer updates $\Z$, $\Ss$, and $\Y$.}
\end{figure}

To build a complete network architecture, we initialize $\Z^{(0)}, \Ss^{(0)}, \Y^{(0)}$ to be all-zero matrices and sequentially stack $B$ unrolling layers~\eqref{eq:GUN_multiple}. This is hypothetically equivalent to running the iteration steps~\eqref{eq:general_admm} for $B$ times. Through optimizing trainable parameters in edge-weight-sharing  graph convolutions and two sub-neural-networks, we obtain the denoised output $\widehat{\X} = \X^{(B)}$.

Here the trainable parameters come from two parts, including filter coefficients in each edge-weight-sharing  graph convolution and the parameters in the neural networks~\eqref{eq:GUN_y} and~\eqref{eq:GUN_z}. Through optimizing those parameters, we can capture complicated priors in the original graph signals in a data-driven manner. To train those parameters, we consider the loss function
\begin{equation}
\label{eq:loss}
   {\rm loss} \ = \  \left\| f\left({\T}\right) - \T \right\|_{F}^2 = \left\|\widehat{\X} - \T \right\|_{F}^2,
\end{equation}
where $\left\|\cdot\right\|_{F}$ is the Frobenius norm, $\widehat{\X}$ is the output of the proposed network $f(\cdot)$, and $\T$ are the original measurements. We then use the stochastic gradient descent to minimize the loss and optimize this network~\cite{Goodfellow:2016}. Note that the noisy measurement $\T$ is used as both input and supervision of the network.

Hypothetically, the loss could be zero when a neural network is trained to be an identity mapping. In other words, the denoised output is the same as the input noisy measurements. In practice, however, this does not happen since the building block of a graph unrolling network is an edge-weight-sharing  graph convolution, whose effect heavily depends on irregular graph structures. This convolution injects implicit graph regularization to the network architecture and the overall optimization problem.

Our algorithm unrolling here is rooted in the  half-quadratic splitting  algorithm. In practice, our optimization problem can be solved using various alternative iterative algorithms, which may lead to distinct network architectures.~\emph{No matter what iterative algorithm is used, the core strategy is to follow the iterative steps and use trainable edge-weight-sharing  graph convolution to substitute fixed, yet computationally expensive graph filtering.} We call a network architecture that follows this strategy a~\emph{graph unrolling network (GUN)}.

Compared to conventional graph signal denoising  algorithms~\cite{ChenSMK:14a, ShumanNFOV:13},  the  proposed  GUN  is able  to  learn  a  variety  of complicated signal priors from given graph signals by leveraging the learning ability of deep neural networks. Compared  to many generic graph neural networks~\cite{KipfW:17}, the  proposed GUN is interpretable by following analytical iterative steps. We unroll an iterative algorithm for solving~\eqref{eq:graph_signal_denoising_optimization} into a graph neural network by mapping each iteration into a single network layer and stacking multiple layers together. In this manner, the proposed GUN can be naturally interpreted as a parameter optimized algorithm.

In the following, we present two special cases of the GUN, which are obtained through unrolling graph sparse coding and graph trend filtering, respectively.  Graph sparse coding is a typical graph-dictionary-based denoising algorithm, where we first design a graph dictionary based on a series of graph filters and then select a few elementary graph signals from a graph dictionary to approximate a noisy graph signal~\cite{HammondVG:11,ShumanWHV:15,ShafipourKM:19}. The unrolling version of graph sparse coding essentially 
combines these two steps in an end-to-end learning process and uses an edge-weight-sharing graph convolution to substitute the predesigned graph dictionary. On the other hand, graph trend filtering is a typical graph-regularization-based denoising algorithm, where we first formulate an optimization problem with graph total variation as an explicit graph regularization and then solve this optimization problem to denoise graph signals~\cite{WangSST:16,VarmaLKC:20}. The unrolling version of graph trend filtering uses a trainable edge-weight-sharing graph convolution to provide an implicit graph regularization and uses end-to-end learning to optimize the trainable parameters. Comparing these two methods, graph trend filtering is designed for piecewise-constant and piecewise-smooth graph signals, so that GUTF is more regularized. On the other hand, graph sparse coding works for a broader class of graph signals, resulting in GUSC being more general, but typically requires more training data. In the experiments, we will see that GUTF achieves better denoising performances than GUSC for simulated data, including smooth graph signals, piecewise-constant graph signals, and piecewise-smooth graph signals. When the number of graph signals increases, the gap between GUTF and GUSC decreases. On the other hand, GUSC achieves better denoising performance than GUTF for real-world data, which has more complicated structure than simulated data.

\begin{algorithm}[h]
  \footnotesize
  \caption{\label{alg:GUSC} Graph unrolling sparse coding (GUSC)}
  \begin{tabular}{@{}ll@{}}
    \addlinespace[1mm]
   {\bf Input} 
      & $\T$  ~~~ matrix of measurements \\
      & $\Adj$  ~~~ graph adjacency matrix \\
      & $B$  ~~~ number of network layers \\
      & $\alpha$ ~~~ hyperparameter \\
     {\bf Output}  
      & $\Xw$  ~~~ matrix of denoised graph signals\\
    \addlinespace[2mm]
    {\bf Function} 
    & {\bf GSC($\T, \Adj, B, \alpha$) } \\
	& $\Ss^{(0)} \leftarrow \bf{0}$    \\	
	& $\{\p_i\}_{i \in \V} \leftarrow $ eigendecomposition of $\Adj$  \\
	& for $b = 1:B$ \\  
	& ~~~$\Aa, \Bb, \Dd, \Ee_{\ell,k,k',i,j} \leftarrow {\rm  MLP}([\p_j - \p_i])$  \\		
    & ~~~$\X^{(b)}  \leftarrow 
\Aa *_a \Ss^{(b-1)}+\Bb *_a \T$,\\
    & ~~~$\Ss^{(b)} \leftarrow  \Dd *_a \X^{(b)}+\Ee *_a \Z^{(b-1)}$, \\
    & ~~~$\Z^{(b)} \leftarrow  S_{\alpha} \left( \Ss^{(b)} \right)$ \\
    & end \\
	& $\Hh_{\ell,k,k',i,j} \leftarrow {\rm  MLP}([\p_j - \p_i])$  \\	
    & $\widehat{\X}  \leftarrow  \Hh *_a \Ss^{(B)}$ \\
    & minimize $\left\| \widehat{\X} - \T \right\|_F^2$ and update all the parameters \\
    & {\bf return} $\Xw$ \\
     \addlinespace[1mm]
  \end{tabular}
\end{algorithm}

\subsection{Graph sparse coding}
\label{sec:graph_sparse_coding}
As discussed in Section~\ref{sec:GSC}, graph sparse coding~\eqref{eq:graph_sparse_coding} considers reconstructing noiseless graph signals through graph filtering and regularizing base graph signals to be sparse. In this setting, $\x \ = \ \h *_v \s \ = \ \sum_{\ell=1}^L h_{\ell} \Adj^\ell \s$, $u( \cdot) = 0, r(\cdot) = \alpha\left\| \cdot \right\|_1$ and $\Pj = \Q = \Id$.  

We can plug in those specifications to~\eqref{eq:GUN_single} and obtain a customized graph unrolling network. We consider three modifications for the customization. First,
we remove the terms related to $\y$ because $u(\y) = 0$ and $\y$ should not effect optimization anymore. Second, 
we remove the terms related to $\x$ because the goal of graph sparse coding is to look for a code $\s$ and there is no need to update an intermediate variable $\x$. Third, we replace~\eqref{eq:GUN_z} by a soft-thresholding function because it is the analytical solution of~\eqref{eq:general_admm_z}~\cite{BoydPCPE:11}. We finally obtain the $b$th unrolling layer customized for graph sparse coding to be
\begin{subequations}
\label{eq:GUSC}
\begin{eqnarray*}
\label{eq:GUSC_x}
\X^{(b)} & \leftarrow & 
\Aa *_a \Ss^{(b-1)}+\Bb *_a \T,
\\
\label{eq:GUSC_s}
\Ss^{(b)} & \leftarrow & \Dd *_a \X^{(b)}+\Ee *_a \Z^{(b-1)},
\\
\label{eq:GUSC_z}
\Z^{(b)} & \leftarrow & S_{\alpha} \left( \Ss^{(b)} \right),
\end{eqnarray*}
\end{subequations}
where $\alpha$ is a hyperparameter and $S_{\cdot}(\cdot)$ is a soft-thresholding function,
\begin{eqnarray*}
\big[ S_{\alpha}(\x) \big]_i \ = \ 
\begin{cases}
x_i - \alpha,~~\text{if}~ x_i - \alpha, 
\\
0, ~~\text{if}~ -\alpha \leq x_i \leq \alpha, 
\\
x_i + \alpha,~~\text{if}~ x_i < -\alpha.
\end{cases}
\end{eqnarray*}
All the training parameters are involved in the edge-weight-sharing  graph convolutions, $\Aa *_a, \Bb *_a, \Cc *_a, \Dd *_a $ and $\Ee *_a$. Since the architecture inherits from both the graph unrolling network and graph sparse coding, we call this architecture a~\emph{graph unrolling sparse coding (GUSC)}. The training paradigm follows the general graph unrolling network; see its overall implemetation in Algorithm~\ref{alg:GUSC}. The hyperparameter $\alpha$ in the soft-thresholding function could be trainable. In the experiments, we find that the performance of a fixed $\alpha$ is slightly better than a trainable $\alpha$; see Section~\ref{sec:simulation}.

\begin{algorithm}[h]
  \footnotesize
  \caption{\label{alg:GUTF} Graph unrolling trend filtering (GUTF)}
  \begin{tabular}{@{}ll@{}}
    \addlinespace[1mm]
   {\bf Input} 
      & $\T$  ~~~ matrix of measurements \\
      & $\Adj$  ~~~ graph adjacency matrix \\
      & $B$  ~~~ number of network layers \\
      & $\alpha$  ~~~ hyperparameter \\
     {\bf Output}  
      & $\Xw$  ~~~ matrix of denoised graph signals\\
    \addlinespace[2mm]
    {\bf Function} 
    & {\bf GTF($\T, \Adj, B, \alpha$) } \\
	& $\X^{(0)} \leftarrow \bf{0}$    \\
	& Obtain $\Delta$ from $\Adj$ via~\eqref{eq:Delta}    \\
	& $\{\p_i\}_{i \in \V} \leftarrow $ eigendecomposition of $\Adj$ \\
	& for $b = 1:B$ \\   
	& ~~~$\Bb_{\ell,k,k',i,j}, \Cc_{\ell,k,k',i,j} \leftarrow {\rm  MLP}([\p_j - \p_i])$  \\
	& ~~~$\Y^{(b)}  \leftarrow  S_{\alpha} \left(  \Delta \X^{(b-1)} \right)$ \\		  
    & ~~~$\X^{(b)} \leftarrow  \Bb *_a \T + \Cc *_a  \left( \Delta^T  \Y^{(b)} \right),$ \\
    & end \\
    & $\widehat{\X}  \leftarrow \X^{(B)}$ \\
    & minimize $\left\| \widehat{\X} - \T \right\|_F^2$ and update all the weights \\
    & {\bf return} $\Xw$ \\
     \addlinespace[1mm]
  \end{tabular}
\end{algorithm}

\subsection{Graph trend filtering}
\label{sec:graph_trend_filtering}
As discussed in Section~\ref{sec:GTF}, graph trend filtering~\eqref{eq:graph_trend_filtering} introduces a graph total variation term to regularize the sparsity of the first-order difference of a graph signal. In this case,  $\x = \h *_v \s = \s$, $u(\cdot) = \alpha \left\| \cdot \right\|_1$, $r(\cdot) = 0$, $\Pj = \Delta$ and $\Q = \Id$.  

Plugging these specifications into~\eqref{eq:GUN_single} leads to a customized graph unrolling network. We consider three modifications. First, we remove the terms related to $\z$ because $r(\z) = 0$. Second, we remove the terms related to $\s$ because $\x = \s$ and there is no need to update both. Third, we replace~\eqref{eq:GUN_y} by a soft-thresholding function which is the analytical solution of~\eqref{eq:general_admm_y}~\cite{BoydPCPE:11}. We finally obtain the $b$th unrolling layer customized for graph trend filtering to be
\begin{subequations}
\label{eq:GUTF}
\begin{eqnarray*}
\label{eq:GUTF_x}
\X^{(b)} & \leftarrow & 
\Bb *_a \T+\Cc *_a \left( \Delta^T \Y^{(b-1)} \right),
\\
\label{eq:GUTF_y}
\Y^{(b)} & \leftarrow & S_{\alpha} \left( \Delta \X^{(b)} \right),
\end{eqnarray*}
\end{subequations}
where $S_{\cdot}(\cdot)$ is a soft-thresholding function. All the training parameters are involved in the edge-weight-sharing  graph convolutions, $\Bb *_a$ and $\Cc *_a$. Since the architecture inherits from the graph unrolling network and graph trend filtering, we call this architecture a~\emph{graph unrolling trend filtering (GUTF)}. The training paradigm follows the general graph unrolling network; see its overall implemetation in Algorithm~\ref{alg:GUTF}. 

Comparing GUTF and GUSC, both follow from the general graph unrolling framework and are based on the proposed edge-weight-sharing graph convolution. The main difference is that GUTF involves a vertex-edge dual representation, where the vertex-based features and edge-based features are converted through the graph incident matrix $\Delta$. This design is potentially better in capturing fast transitions over edges,  leading to improved denoising performance on piecewise-constant graph signals. GUSC heavily relies on the learning ability of the edge-weight-sharing graph convolution, which is potentially more general, but needs more training data than GUTF.

\section{Experimental Results}
\label{sec:experiments}
In this section, we evaluate the proposed graph unrolling networks on denoising both simulated and real-world graph signals. We also test various noise models and various network settings. Our experiments show that the proposed graph unrolling networks consistently achieve better denoising performances than conventional graph signal denoising algorithms and state-of-the-art graph neural networks on both simulated and real-world graph signals under Gaussian noises, mixture noises and Bernoulli noises. We also study the convergence properties of the proposed graph unrolling networks.

\begin{figure*}[thb]
  \begin{center}
    \begin{tabular}{ccc}
   \includegraphics[width=0.47\columnwidth]{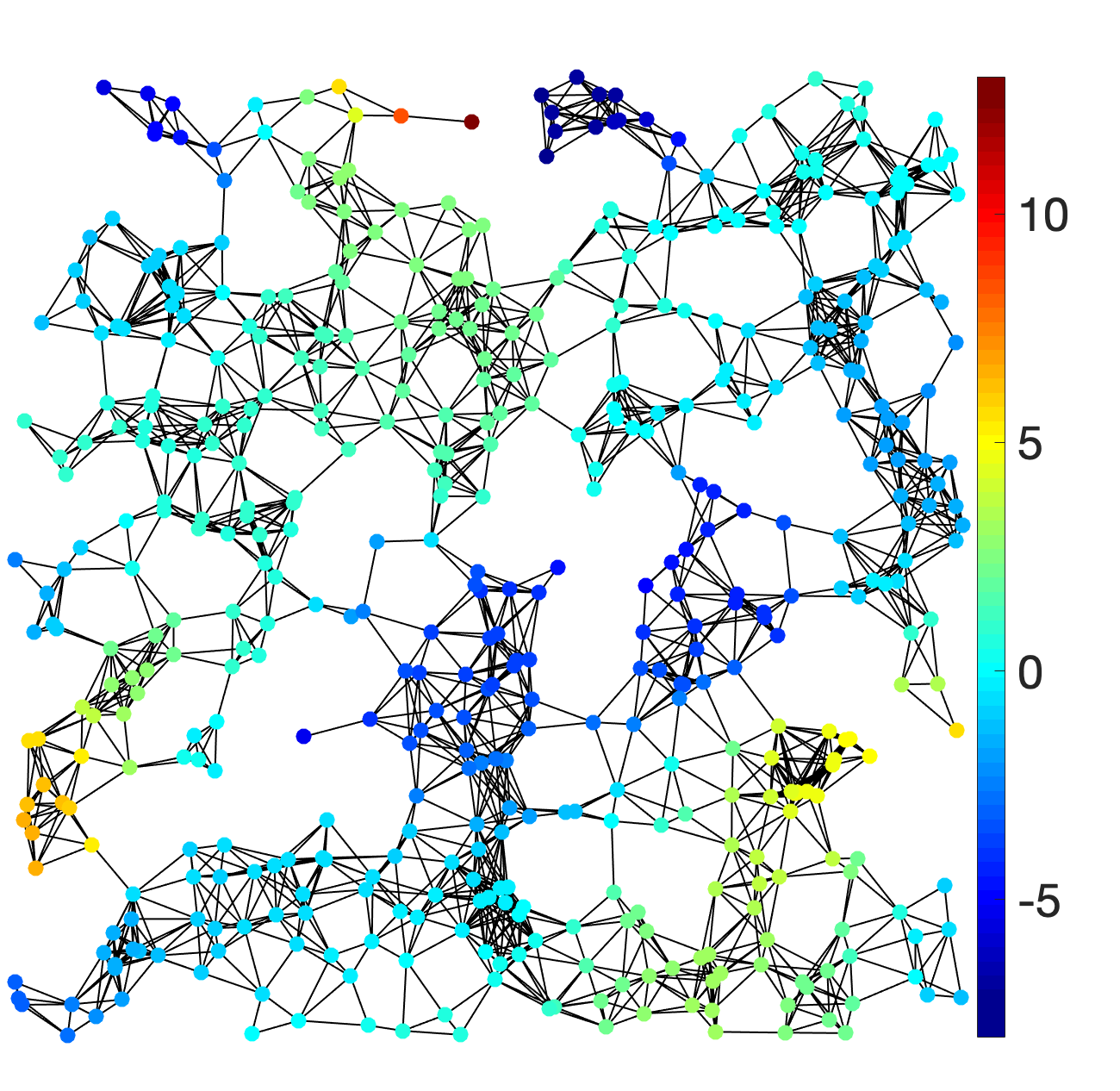}  &   \includegraphics[width=0.47\columnwidth]{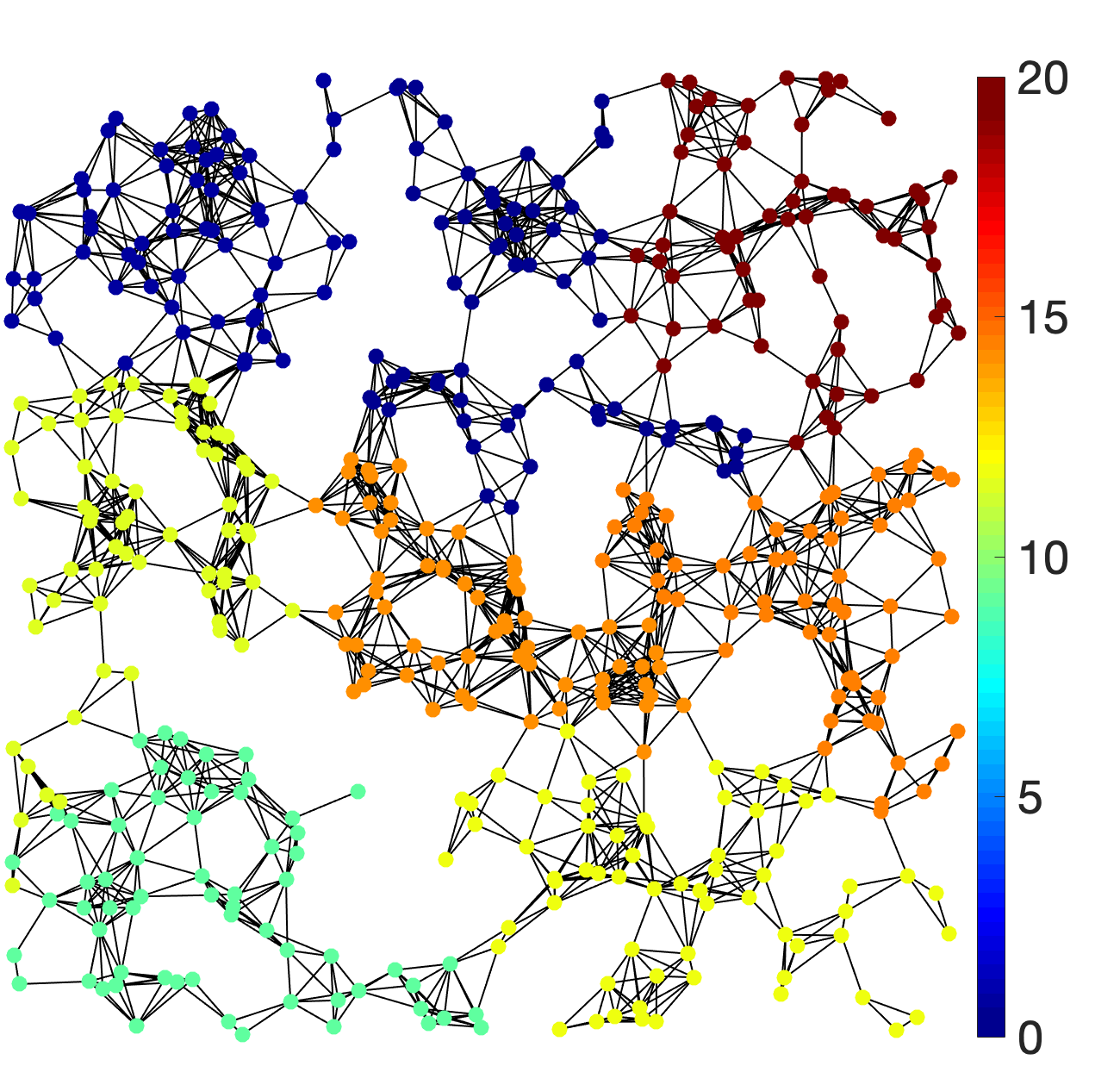}  &  
   \includegraphics[width=0.47\columnwidth]{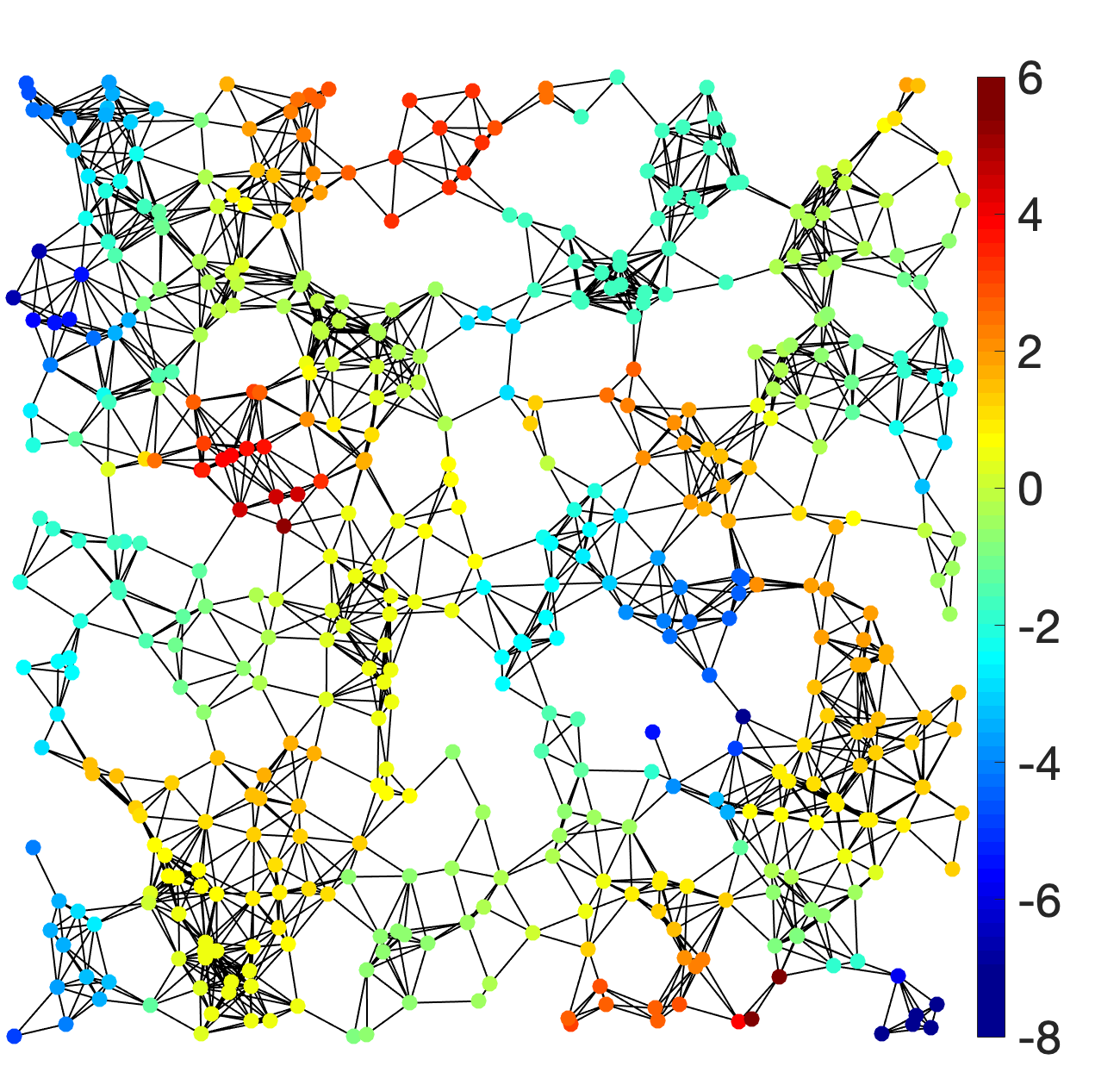}
    \\
    {\small (a) Smooth graph signal.} &  {\small  (b) Piecewise-constant graph signal.}  
    & {\small  (c) Piecewise-smooth graph signal.}

  \end{tabular}
\end{center}
\caption{\label{fig:simulation_example} Visualization of simulation examples.}
\end{figure*}

\begin{table*}[h]
	\vskip 0.1in
	\begin{center}
		\begin{small}
			\begin{sc}
				\begin{tabular}{c|cccc|cccc|cccc}
					\toprule
					 & \multicolumn{4}{c|}{Smooth} & \multicolumn{4}{c|}{Piecewise-Constant} &
					 \multicolumn{4}{c}{Piecewise-Smooth} \\
					Method & 1 & 10 & 100 & 1000 & 1 & 10 & 100 & 1000 & 1 & 10 & 100 & 1000  \\
					\midrule
					baseline  &  $0.563$ & $0.517$ & $0.498$ & $0.5$ & $0.563$ & $0.516$ & $0.498$ & $0.5$ & $0.562$ & $0.517$ & $0.498$ & $0.5$   \\
					GLD  &  $0.078$ & $0.076$ & $0.071$ & $0.07$ &  $0.045$ & $0.045$ & $0.047$ & $0.042$ &  $0.114$ & $0.111$ & $0.143$	 & $0.13$  \\
					GTF &  $0.098$ &	$0.1$ &	$0.093$ & $0.094$  &  $0.039$ &	$0.043$ &	$0.045$ &	$0.039$ & $0.135$ & $0.115$ & $0.104$ & $0.111$  \\
					GFT &  $0.159$ & $0.125$ &	$0.111$ & $0.112$ &  $0.078$ &	$0.077$ & $0.082$ &	$0.077$ &  $0.162$ & $0.157$ &	$0.185$ & $0.171$ \\
					SGWT & $0.117$ & $0.128$ &	$0.107$ &	$0.110$  & $0.087$ &	$0.086$ & $0.079$	& $0.072$ & $0.172$ &	$0.146$ & $0.153$ &	$0.165$  \\
					QMF & $0.319$ &	$0.322$ & $0.327$ &	$0.334$ & $0.373$ &	$0.369$ &	$0.349$	& $0.346$ & $0.361$ & $0.332$ &	$0.326$ & $0.343$ \\
					CSFB & $0.106$ & $0.101$ &	$0.069$ & $0.075$ & $0.101$ &	$0.094$	 & $0.104$ &	$0.097$ & $0.173$ & $0.163$ & $0.231$ & $0.197$  \\
					\midrule
					MLP &  $0.373$ & $0.189$ & $0.079$ & $0.032$ & $0.182$ & $0.137$ & $0.04$ & $0.014$ & $0.335$ &	$0.209$ & $0.089$ & $0.037$ \\
					GCN & $0.067$ & $0.058$ & $0.039$ & $0.037$ & $0.048$ & $0.039$	& $0.028$ & $0.024$ & $0.102$ & $0.094$ & $0.118$ & $0.074$ \\
					GAT & $0.062$ & $0.057$ & $0.032$ & $0.028$ & ${\bf 0.034}$ & $0.05$ & ${\bf 0.023}$ & $0.018$ & $0.095$ & $0.076$	 & $0.045$	& $0.033$ \\
					\midrule
                    GUSC & $0.049$ & $0.053$ & $0.029$ & ${\bf 0.023}$ & $0.036$ & $0.04$ & $0.024$ & $0.014$ & $0.074$ & $0.069$ & ${\bf 0.031}$ & ${\bf 0.022}$ \\
                    GUTF & ${\bf 0.045}$ & ${\bf 0.046}$ & ${\bf 0.027}$ & ${\bf 0.023}$ & $0.035$ & ${\bf 0.039}$ & ${\bf 0.023}$ & ${\bf 0.011}$ & ${\bf 0.066}$ & ${\bf 0.064}$ & ${\bf 0.031}$ & ${\bf 0.022}$ \\
					\bottomrule
				\end{tabular}
			\end{sc}
		\end{small}
	\end{center}
	\caption{Denoising of three types of simulated graph signals with Gaussian noises. GUTF provides the best denoising performances in most cases.\label{tab:Gaussian_noise}}
\end{table*}

\begin{table*}[h]
	\vskip 0.1in
	\begin{center}
		\begin{small}
			\begin{sc}
				\begin{tabular}{c|cccc|cccc|cccc}
					\toprule
					 & \multicolumn{4}{c|}{Smooth} & \multicolumn{4}{c|}{Piecewise-Constant} &
					 \multicolumn{4}{c}{Piecewise-Smooth} \\
					Method & 1 & 10 & 100 & 1000 & 1 & 10 & 100 & 1000 & 1 & 10 & 100 & 1000  \\
					\midrule
					baseline  &  $0.393$ & $0.389$ & $0.377$ & $0.374$ & $0.394$ & $0.386$ & $0.377$ & $0.374$ & $0.394$ & $0.388$ & $0.377$	& $0.374$   \\
					GLD  &  $0.05$ & $0.062$ & $0.057$ & $0.057$ & $0.052$ & $0.04$ & $0.041$ & $0.036$ & $0.105$ &  $0.093$ & $0.122$ &	$0.107$  \\
					GTF &  $0.077$ & $0.09$ & $0.08$ &  $0.078$ &		$0.043$ & $0.042$ & $0.039$ & $0.033$ & $0.09$  &	$0.09$ & $0.085$	 &  $0.077$	  \\
					GFT &  $0.096$ & $0.089$ & $0.076$ & $0.077$	&	$0.079$ & $0.067$	& $0.07$ & $0.067$ & $0.137$ &	$0.129$ & $0.159$ & $0.144$  \\
					SGWT & $0.108$ & $0.097$ & $0.092$ & $0.089$ & $0.075$ & $0.074$ & $0.069$ & $0.063$ & $0.133$ & $0.121$ & $0.129$ & $0.141$ \\
					QMF & $0.258$ & $0.277$ & $0.275$ & $0.274$ & $0.29$ & $0.292$ & $0.283$ & $0.277$ & $0.284$ & $0.27$ & $0.266$ & $0.268$	 \\
					CSFB & $0.057$ & $0.079$ & $0.049$ & $0.054$ & $0.115$ & $0.089$ & $0.099$ & $0.095$ & $0.167$ & $0.149$ & $0.216$ & $0.182$	\\
					\midrule
					MLP &  $0.289$ & $0.16$ & $0.057$& $0.026$ & $0.191$ & $0.13$ & $0.028$	& $0.01$ & $0.278$ & $0.167$ & $0.069$	 & $0.03$ \\
					GCN & $0.039$ & $0.048$ & $0.034$ & $0.032$ & $0.051$ & $0.033$ & $0.025$ & $0.021$ & $0.107$ & $0.081$ & $0.1$	 & $0.064$ \\
					GAT &  $0.042$ & $0.043$ & $0.026$ & $0.021$ & $0.045$ & $0.04$ & $0.017$ & $0.007$ & $0.092$ & $0.062$ & $0.036$ & $0.021$\\
					\midrule
                    GUSC &  $0.027$ & $0.044$ & $0.023$ & $0.021$ & $0.043$ & $0.034$ & ${\bf 0.016}$ & ${\bf 0.008}$ & $0.072$ & $0.056$ & $0.03$ & $0.022$ \\
                    GUTF & ${\bf 0.025}$ & ${\bf 0.038}$ & ${\bf 0.022}$ &	${\bf 0.018}$	& ${\bf 0.041}$ & ${\bf 0.031}$	& ${\bf 0.016}$ &	${\bf 0.008}$ & ${\bf 0.06}$ & ${\bf 0.05}$ & ${\bf 0.026}$ & ${\bf 0.018}$\\
					\bottomrule
				\end{tabular}
			\end{sc}
		\end{small}
	\end{center}
	\caption{Denoising of three types of simulated graph signals with mixture noises (Gaussian and Laplace). GUTF provides the best denoising performances in most cases.	\label{tab:mixture_noise}}
\end{table*}

\subsection{Experimental setup}
\mypar{Configurations} For GUSC and GUTF, we set the number of epochs for stochastic gradient descent to be $5000$, the number of network layers $B = 1$, feature dimension $d^{(b)} = D^{(b)} = 64$, the threshold in the soft-thresholding function $\alpha = 0.05$ in all the cases. To make a fair comparison, we use the same network setting and training paradigm for graph unrolling networks to train other networks.

\mypar{Baselines}
We consider three classes of competitive denoising algorithms: graph-regularized optimizations, graph filter banks and neural networks. For graph-regularized optimizations, we select graph Laplacian denoising (GLD)~\cite{ShumanNFOV:13} and graph trend filtering (GTF)~\cite{WangSST:16}. Both algorithms introduce graph-regularization  terms to the optimization problem. For graph filter banks, we consider graph Fourier transform (GFT) ~\cite{ShumanNFOV:13}, spectral graph wavelet transform (SGWT)~\cite{HammondVG:11}, graph quadrature-mirror-filters (QMF)~\cite{NarangO:12} and critically sampled filter banks (CSFB)~\cite{TremblayB:16}.
In each case, we obtain the corresponding graph dictionary and use basis pursuit denoising~\cite{BoydPCPE:11} to reconstruct graph signals from noisy inputs. As competative neural networks, we consider multilayer perception with three fully-connected layers~\cite{Goodfellow:2016}, graph convolution networks (GCN) with three graph convolution layers~\cite{KipfW:17}, graph attention networks (GAT) with one graph attention layer~\cite{VelivckovicCCRLB:18} and graph autoencoder (GAE) with three graph convolution layers and one kron-reduction pooling layer~\cite{DoND:10}. In many supervised-learning tasks, previous works realized that  deep graph neural networks with too many layers can suffer from oversmoothing and hurt the overall performance~\cite{WuSZFYW:19,ZhaoA:20}. In our experiments, we also find that more layers do not lead to better denoising performance even with residual connections. We tune hyperparameters for each denoising algorithm and report the best performances.

\mypar{Graph signals} We consider three types of simulated graph signals as well as three types of real-world graph signals. For simulations, we consider smooth, piecewise-constant and piecewise-smooth graph signals; for real-world scenarios, we consider temperature data supported on the U.S weather stations,
traffic data based on the NYC street networks and community memberships based on citation networks. The details will be elaborated in each case.

\mypar{Noise models}
We consider three types of noises to validate the denoising algorithms: Gaussian noise, the mixture noise and Bernoulli noise. In the measurement model~\eqref{eq:noise_model}, we use a length-$N$ vector $\ee$ to denote noise. For  Gaussian noise, each element of $\ee$ follows a Gaussian distribution with zero mean; that is, $\ee_i \sim  \mathcal{N}(0, \sigma^2)$. The default standard deviation is $\sigma = 0.5$. For the mixure noise, each element of $\ee$ follows a mixture of Gaussian distribution and Laplace distribution; that is, $\ee_i \sim  \mathcal{N}(0, \sigma^2) + Laplace(0, b)$. By default, we set $\sigma = 0.2, b = 0.2$. For binary graph signals, we consider adding Bernoulli noise~\cite{ChenYZSK:17}; that is, we randomly select a subset of vertices and flip the associated binary values. Note that (i) the proposed graph unrolling network is not designed for this noise model, but surprisingly, it still performs well; (ii) we change the loss function~\eqref{eq:loss} to the cross-entropy loss during training; and (iii) this denoising task is essentially a classification task, identifying whether the binary value at each vertex is flipped.

\mypar{Evaluation metrics}
To evaluate the denoising performance, the default metric is the normalized mean square error (NMSE); that is,
\begin{equation*}
 {\rm NMSE} \ = \ \frac{\left\| \widehat{\x} - \x  \right\|_2^2}{\left\| \x  \right\|_2^2},
\end{equation*}
where $\x \in \R^{N}$ is a noiseless graph signal, and $\widehat{\x}$ is a denoised graph signal. A smaller value of NMSE indicates a better denoising performance. We also consider the normalized mean absolute error (NMAE); that is,
\begin{equation*}
 {\rm NMAE} \ = \ \frac{\left\| \widehat{\x} - \x  \right\|_1}{\left\| \x  \right\|_1}.
\end{equation*}

For binary graph signals, we evaluate by the error rate (ER),
\begin{equation*}
 {\rm ER} \ = \  \frac{1}{N} \sum_{i=1}^N {\bf 1}{\left( x_i \neq \widehat{x}_i \right)},
\end{equation*}
where $x_i$ and $\widehat{x}_i$ are the $i$th element in $\x, \widehat{\x}$, respectively. A smaller value of ER indicates a better denoising performance. We also consider the F1 score, which is the harmonic mean of the precision and recall. A higher value of F1 indicates a better denoising performance.

\begin{table*}[h]
	\vskip 0.1in
	\begin{center}
		\begin{small}
			\begin{sc}	\begin{tabular}{c||cc||cc||cc|cc}
					\toprule
					& \multicolumn{2}{c||}{Temperature} &  \multicolumn{2}{c||}{Traffic}
					&  \multicolumn{4}{c}{cora}\\
					Metric & \multicolumn{2}{c||}{NMSE}  & \multicolumn{2}{c||}{NMSE} 
					 & \multicolumn{2}{c|}{Error Rate} 
					 & \multicolumn{2}{c}{F1 score} \\
					Method & 1 & 365  & 1 & 24  & 1 & 7 & 1 & 7 \\
					\midrule
					Baseline & $0.34$  & $0.377$ & $0.392$ & $0.377$ & $0.095$ & $0.099$ & $0.829$ & $0.695$ \\
					GLD  &  $0.045$ & $0.024$ & $0.248$ & $0.255$ & $0.055$ & $0.032$ & $0.609$ & $0.793$ \\
					GTF & $0.079$  & $0.036$ & $0.202$ & $0.176$ &   $0.06$ & $0.039$ &$0.484$  & $0.532$\\
					GFT & $0.065$   & $0.053$ & $0.257$ & $0.231$ &  $0.079$ & $0.053$ & $0.459$ & $0.489$ \\
					SGWT & $0.069$ & $0.11$ & $0.184$ & $0.162$ &  $0.074$ & $0.073$ & $0.551$ & $0.569$ \\
					QMF &  $0.26$ & $0.31$ & ${\bf 0.18}$ & $0.185$ & 	$0.087$ & $0.087$ & $0.51$ & $0.512$ \\
					CSFB & $0.07$  & $0.061$ & $0.344$ & $0.36$ & $0.129$ & $0.143$ & $0.43$ & $0.437$	 \\
					\midrule
					MLP & $0.142$ & $0.02$7 & $0.31$ & $0.169$  & $0.095$ & $0.072$ & $0.829$ & $0.695$ \\
					GCN & $0.041$ & $0.033$ & $0.293$ & $0.279$  & $0.042$ & $0.025$ & $ 0.903$ & $ 0.901$ \\
					GAT & $0.044$   & $0.031$ & $0.267$ & $0.264$ & $0.041$ & $0.032$ & $0.909$ & $0.873$ \\	
					\midrule
					GUSC & ${\bf 0.037}$  & ${\bf 0.016}$ & $0.324$  & $0.178$ & ${\bf 0.04}$ & ${\bf 0.024}$ & ${\bf 0.91}$ & ${\bf 0.906}$ \\
					GUTF & $0.053$  &  $0.019$ & $0.266$ & ${\bf 0.158}$ & $0.041$ & $0.03$ & $0.906$ & $0.885$ \\
					\bottomrule
				\end{tabular}
			\end{sc}
		\end{small}
	\end{center}
	\caption{Denoising of real-world data with mixture noises (Gaussian and Laplace). GUSC produces the best denoising performances in most cases.	\label{tab:real_world_denoise}}
	\vspace{-5mm}
\end{table*}

\subsection{Simulation validation}
\label{sec:simulation}
\mypar{Smooth graph signals} 
We simulate a random geometric graph, by generating an undirected graph with $500$ vertices randomly sampled from the unit square. Two vertices are connected when their Euclidean distance is less than a threshold. To generate a smooth graph signal, we consider the  bandlimited graph signals~\cite{ChenVSK:15}. We conduct the eigendecomposition of the graph Laplacian matrix and ascendingly order the eigenvalues. The first few eigenvectors span a subspace of smooth graph signals, called a bandlimited space~\cite{ChenVSK:15}. We use the a linear combination of the first few eigenvectors to obtain a smooth graph signal; see  Fig.~\ref{fig:simulation_example} (a).

We denoise four different numbers of graph signals: $1, 10, 100$ and $1000$. We expect that graph unrolling networks will provide better performances with more samples. Columns $2 - 5$ in Tables~\ref{tab:Gaussian_noise} and~\ref{tab:mixture_noise} compare the denoising performances of smooth graph signals under Gaussian noise and the mixure noise, respectively.  We see that (i) the proposed two graph unrolling networks significantly outperforms all the other competitive methods. For denoising a single graph signal, GUTF is around $40\%$ better than the standard graph Laplacian denoising; for denoising $1000$ graph signal, GUTF is around $70\%$ better than the standard graph Laplacian denoising! (ii) among the conventional graph signal denoising algorithms, graph Laplacian denoising achieves the best performances;  (iii) neural-network-based methods overall outperform conventional graph signal denoising algorithms. Surprisingly, even training with a single graph signal, most neural networks still provide excellent denoising performance; and (iv) as standard neural networks, MLP performs poorly when training samples are few and gets better when the number of training samples is increased, which makes sense because MLP does not leverage any graph structure. This shows that we cannot expect an arbitrary neural network without dedicated design to work well for graph signal denoising. 

\mypar{Piecewise-constant graph signals} 
We next simulate piecewise-constant graph signals on a random geometric graph. We first randomly partition the graph into a fixed number of connected and mutually exclusive subgraphs with roughly the same size. Within each subgraph, for each graph signal, we randomly generate a constant value over all vertices in the subgraph. The generated graph signal is piecewise-constant and only changing at the boundary between graph partitions; see an example in Fig.~\ref{fig:simulation_example} (b).

Again, we denoise four different numbers of graph signals: $1, 10, 100$ and $1000$.  Columns $6 - 9$ Tables~\ref{tab:Gaussian_noise} and~\ref{tab:mixture_noise} compare the denoising performances of piecewise-constant graph signals under Gaussian noise and the mixture noise, respectively. Similar to smooth graph signals, we see that (i) the proposed two graph unrolling networks still significantly outperform all the other competitive methods; ii) among the conventional graph signal denoising algorithms, graph trend filtering achieves the best performances as its graph regularization promotes piecewise-constant graph signals; and (iii) MLP fails with few training samples. 

\mypar{Piecewise-smooth graph signals} 
We simulate piecewise-smooth graph signals on a random geometric graph. Similar to piecewise-constant signals, we first partition the graph into mutually exclusive subgraphs. Within each subgraph we generate smoothing signals based on the first-$K$ eigenvectors of the subgraph's Laplacian matrix, using the same approach as generating smooth graph signals. The combined signal over the whole graph is piecewise-smooth; see an example in Fig.~\ref{fig:simulation_example} (c).

Columns $10 - 12$ Tables~\ref{tab:Gaussian_noise} and~\ref{tab:mixture_noise} compare the denoising performances of piecewise-smooth graph signals under Gaussian noise and the mixture noise, respectively. Similar to smooth graph signals, we see that (i) the proposed two graph unrolling networks still significantly outperform all the other competitive methods; ii) among the conventional graph signal denoising algorithms, graph trend filtering achieves the best performances as its graph regularization promotes piecewise-constant graph signals; and (iii) MLP fails with few training samples. 

\mypar{Influence of noise level}
To validate the effect of noises, we vary the noise level and compare the denoising performances of graph neural networks, including GCN, GAT, GAT, GUSC and GUTF. Here we consider Gaussian noises and the noise level is the standard deviation of the noise. Fig.~\ref{fig:GS_noise} and~\ref{fig:PS_noise} compare the denoising performances of smooth graph signals and piecewise-smooth graph signals as a function of noise level, respectively, where the $x$-axis is the noise level and $y$-axis is the logarithm of NMSE and NMAE. We see that the proposed GUSC and GUTF consistently outperform the other methods across all noise levels.

\begin{figure}[thb]
  \begin{center}
    \begin{tabular}{cc}
   \includegraphics[width=0.47\columnwidth]{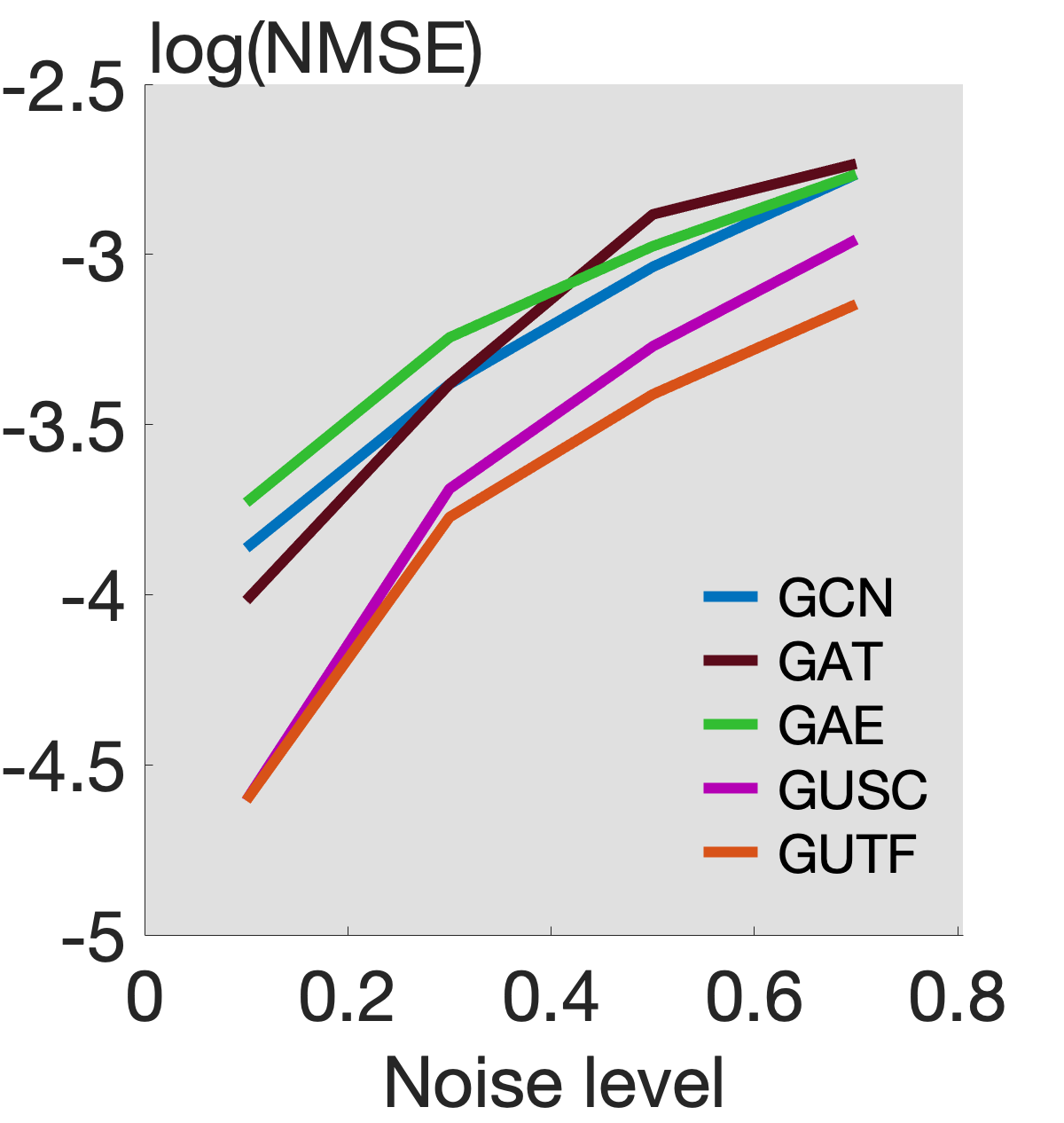} &      \includegraphics[width=0.47\columnwidth]{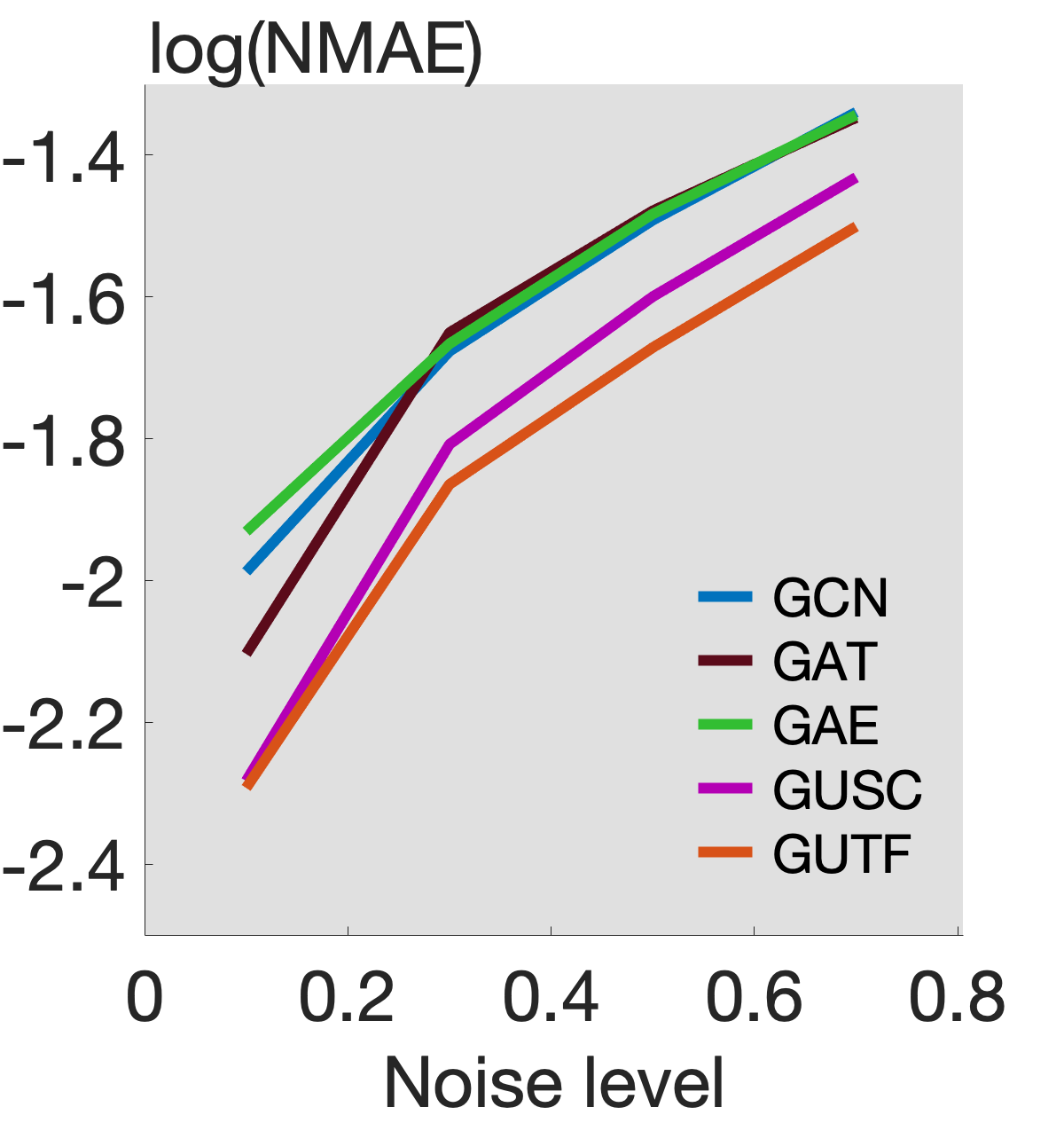}
    \\
    {\small (a) NMSE.} &  {\small  (b) NMAE.}
  \end{tabular}
\end{center}
\caption{\label{fig:GS_noise} Denoising performance of smooth graph signals as a function of noise level. GTUF provides the best denoising performance under both metrics.}
\end{figure}

\begin{figure}[thb]
  \begin{center}
    \begin{tabular}{cc}
   \includegraphics[width=0.47\columnwidth]{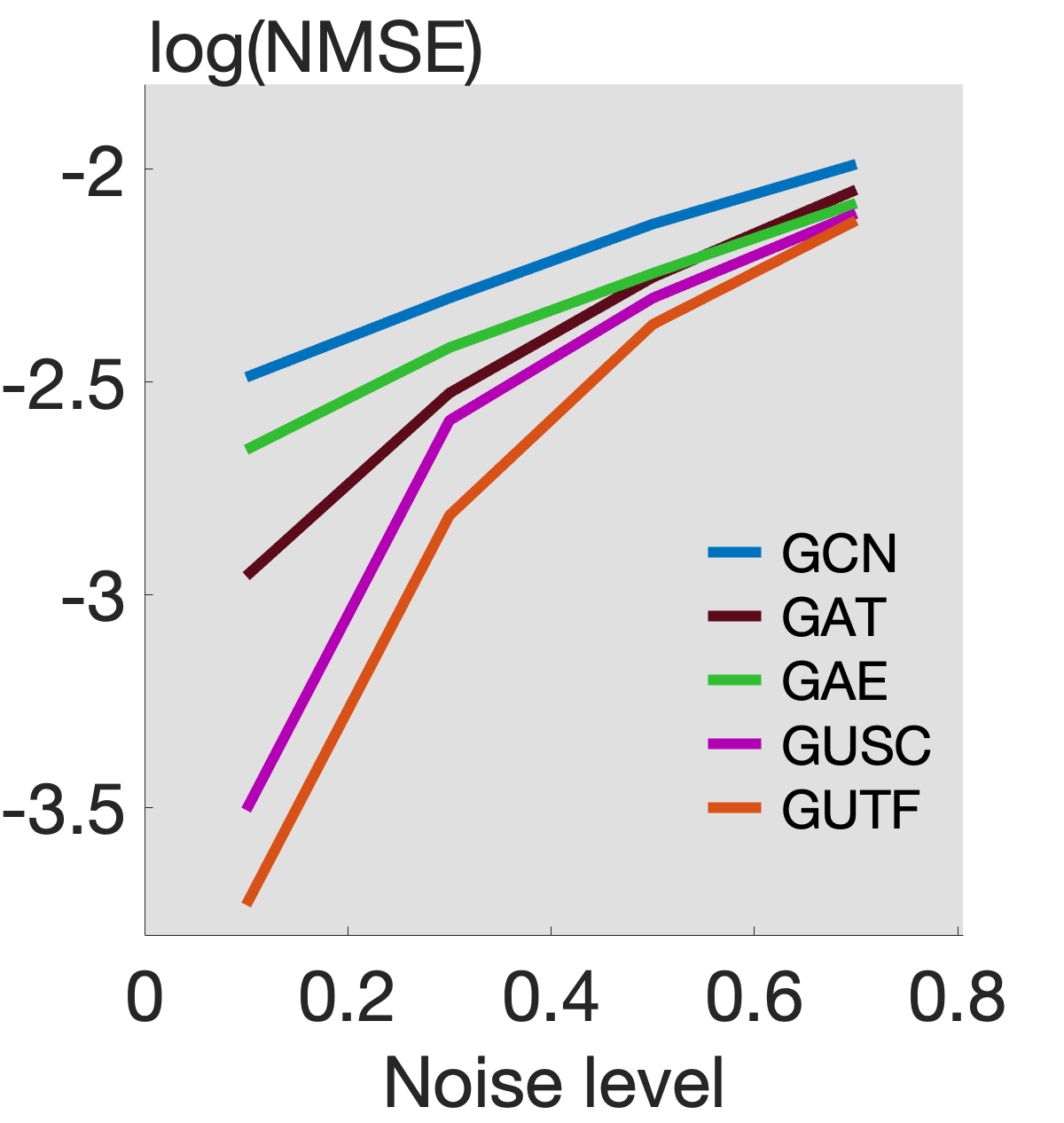} &      \includegraphics[width=0.47\columnwidth]{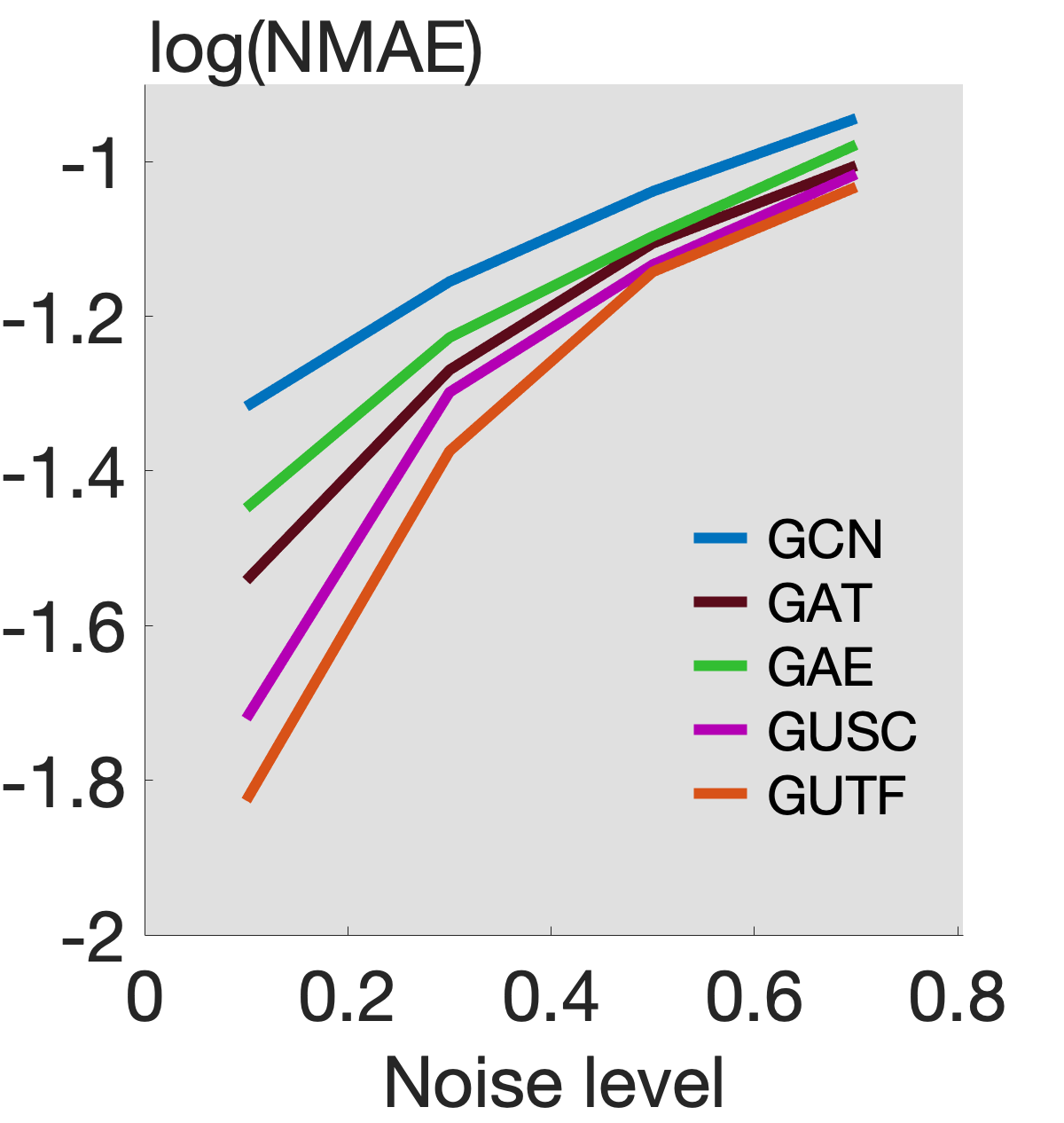}
    \\
    {\small (a) NMSE.} &  {\small  (b) NMAE.}
  \end{tabular}
\end{center}
\caption{\label{fig:PS_noise} Denoising performance of piecewise-smooth graph signals as a function of noise level. GTUF provides the best denoising performance under both metrics.}
\end{figure}

\begin{figure}[thb]
  \begin{center}
    \begin{tabular}{cc}
   \includegraphics[width=0.47\columnwidth]{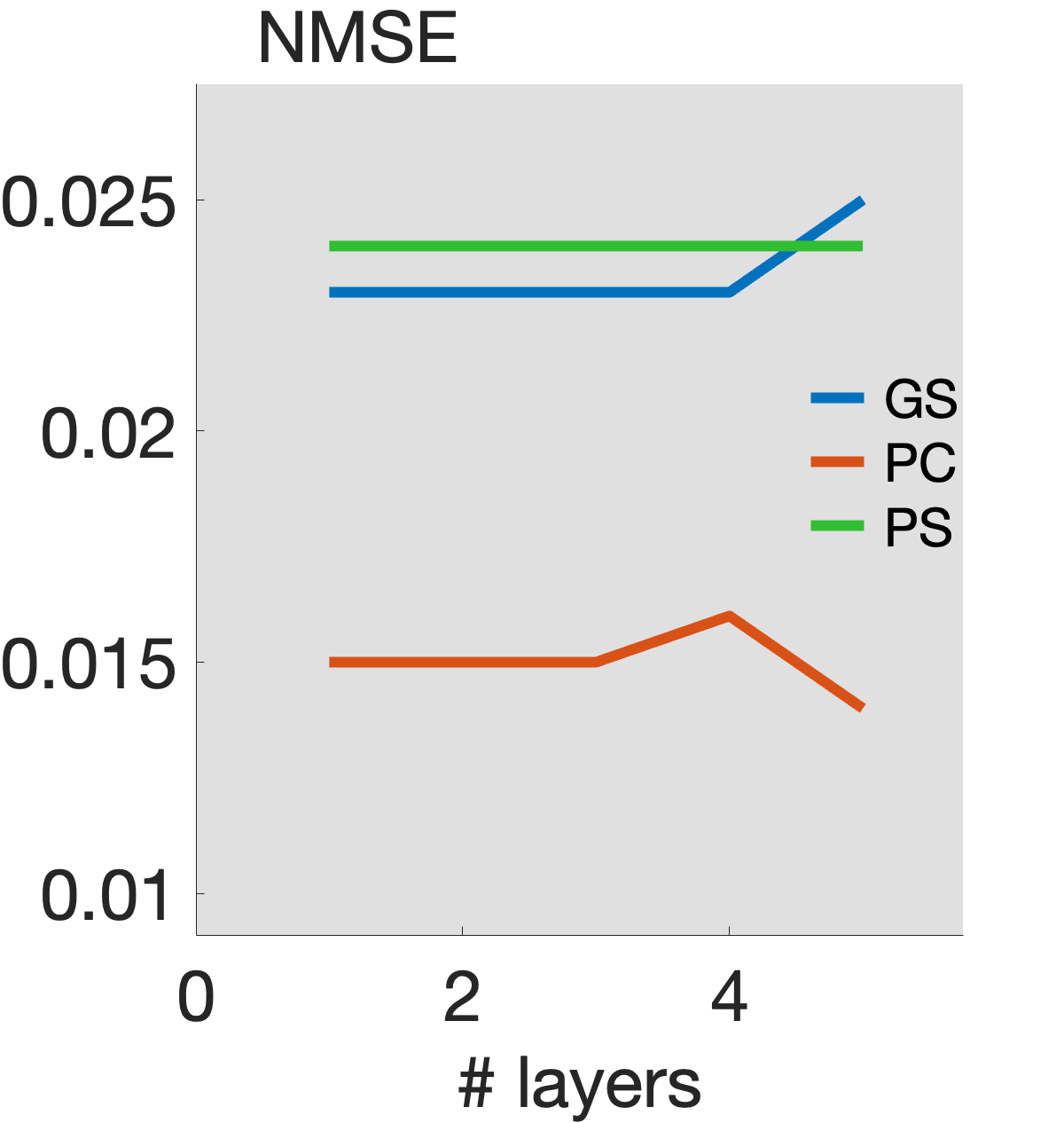} &      \includegraphics[width=0.47\columnwidth]{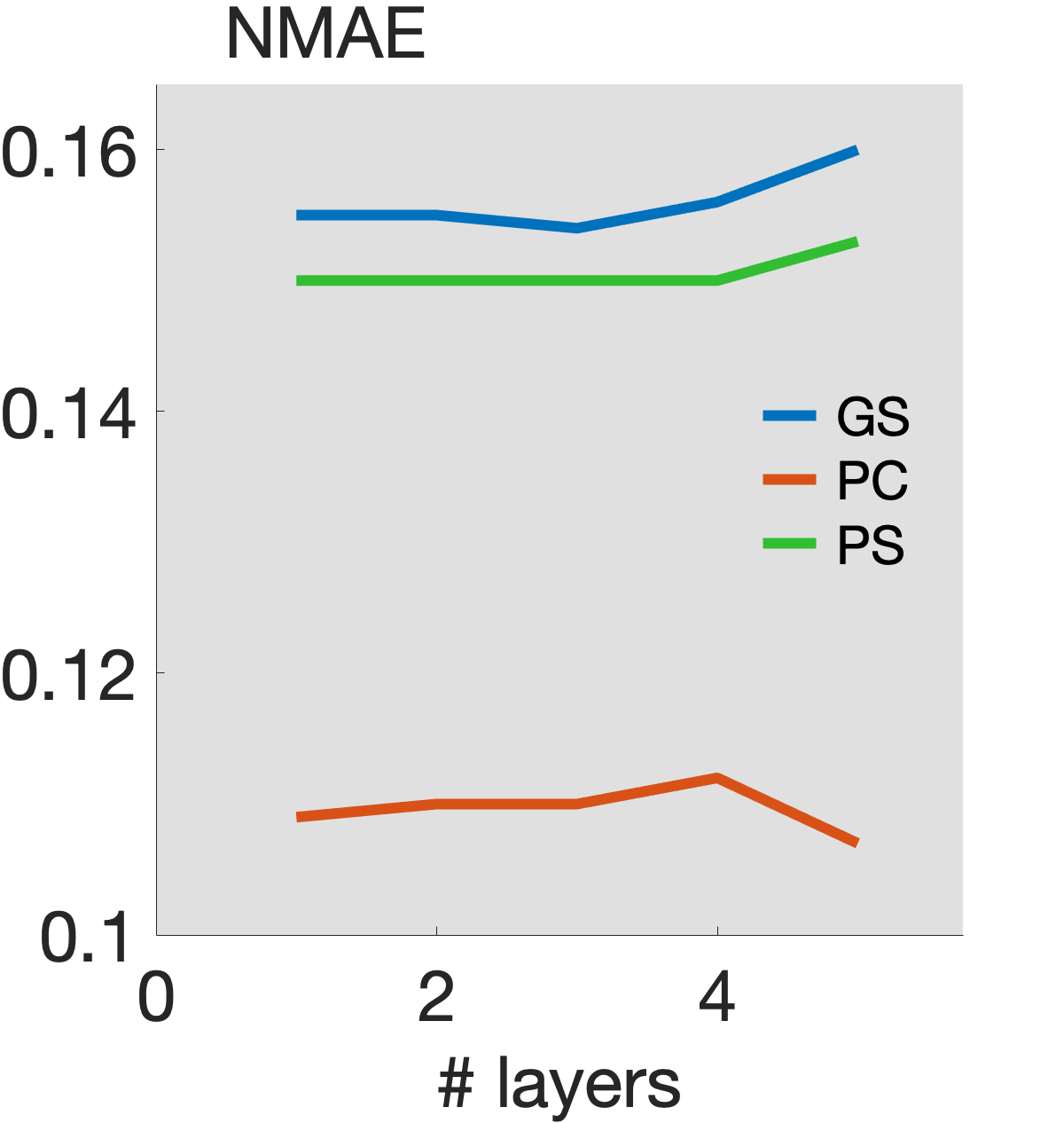}
    \\
    {\small (a) NMSE.} &  {\small  (b) NMAE.}
  \end{tabular}
\end{center}
\caption{\label{fig:num_layers} Denoising performance as a function of number of layers. Varying the number of layers has litter effect on the denoising performance.}
\end{figure}
 
\mypar{Influence of number of layers}
To validate the effect of the number of layers, we vary the number of layers from $1$ to $5$ for GUSC and show the denoising performances on smooth graph signals, piecewise-constant graph signals and piecewise-smooth graph signals. Fig.~\ref{fig:num_layers} shows the denoising performance of three types of graph signals as a function of the number of layers, where the $x$-axis is the number of layers and the $y$-axis is NMSE for Plot (a) and NMAE for Plot (a). We see that varying the number of layers has little effect on the denoising performance.

\begin{figure}[thb]
  \begin{center}
    \begin{tabular}{cc}
   \includegraphics[width=0.47\columnwidth]{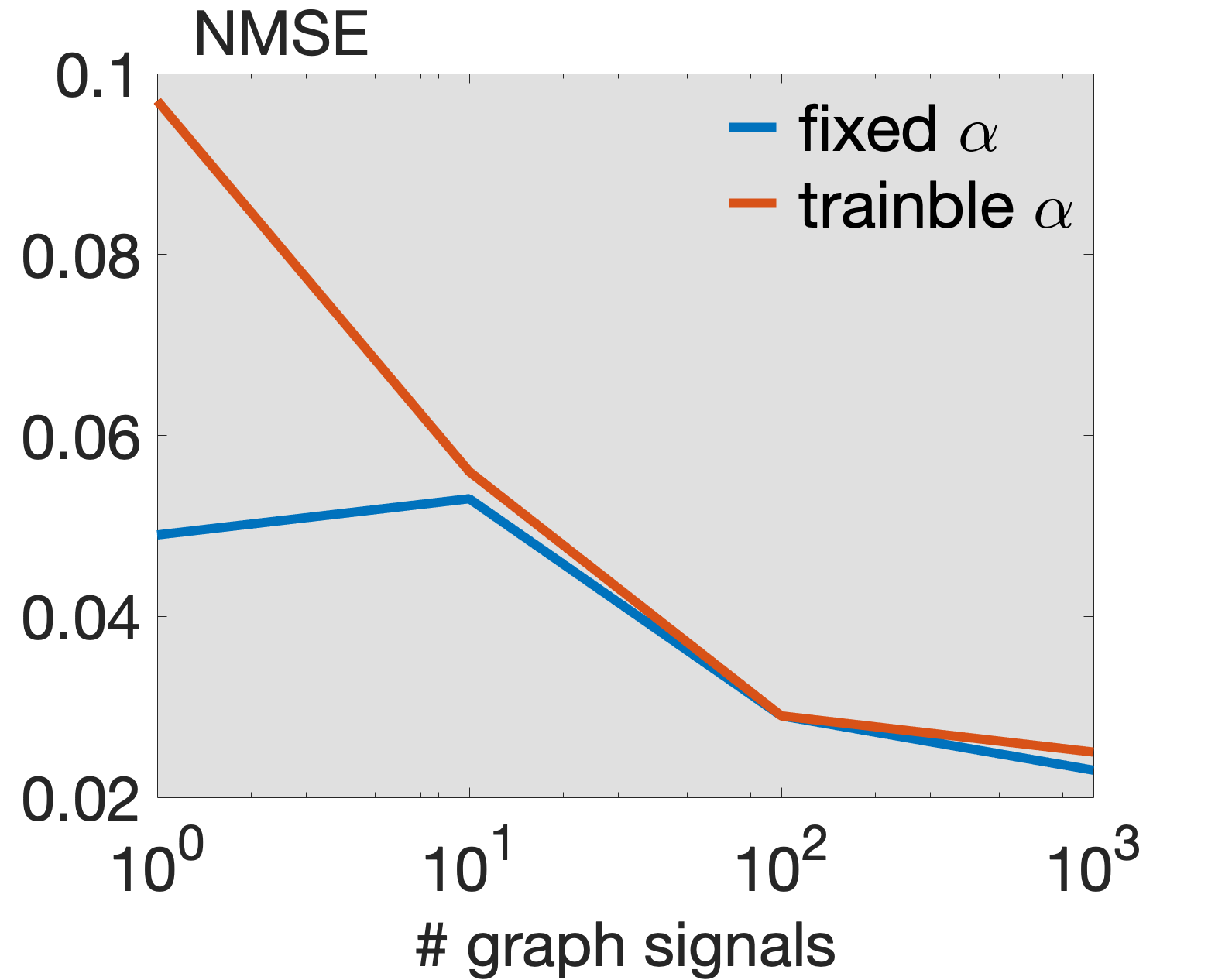} &         \includegraphics[width=0.47\columnwidth]{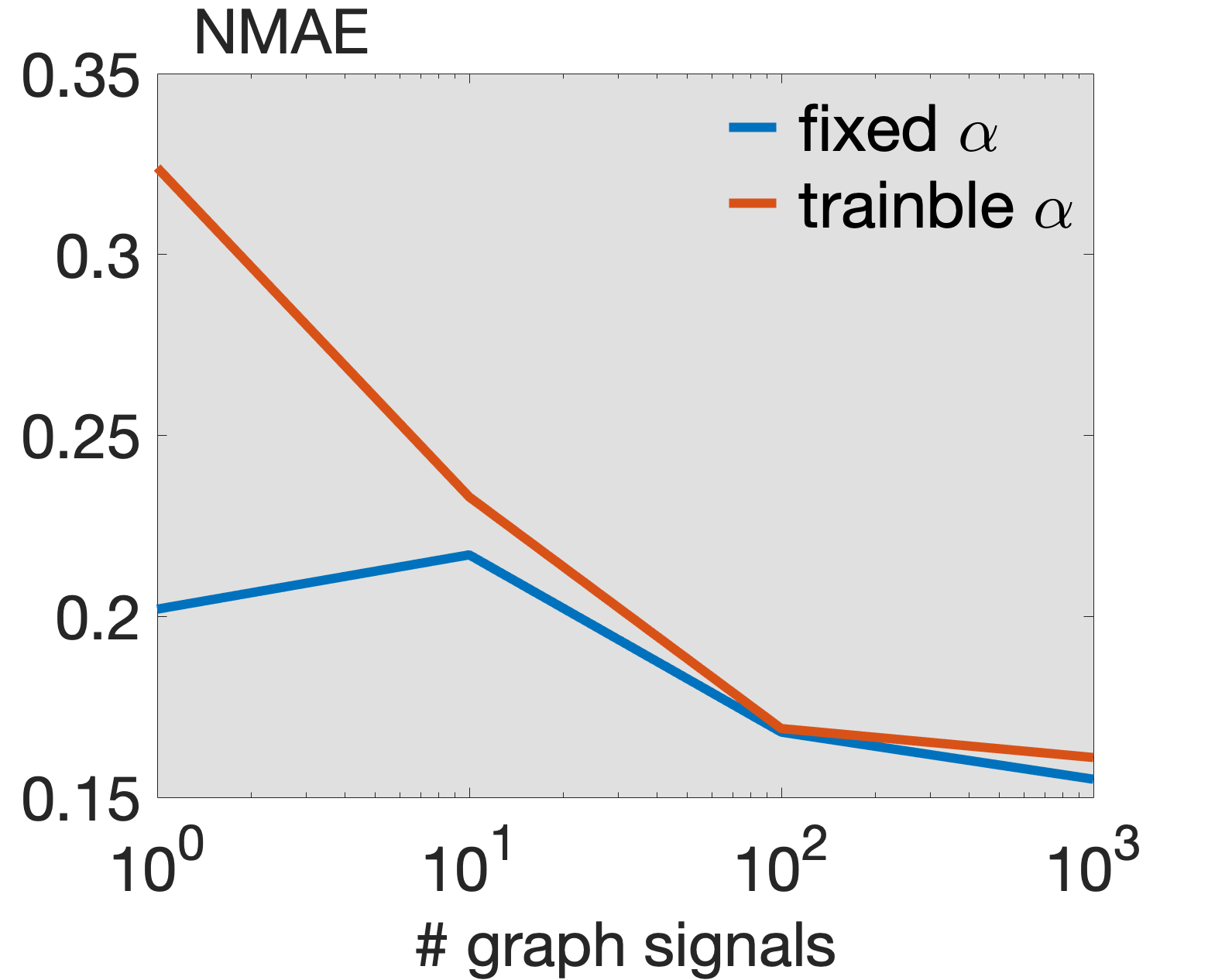}
    \\
    {\small (a) NMSE.} &  {\small  (b) NMAE.}
  \end{tabular}
\end{center}
\caption{\label{fig:threshold} Comparison of denoising performance with a fixed threshold $\alpha$ and a trainable threshold $\alpha$.}
\end{figure}

\mypar{Influence of threshold $\alpha$}
We next compare the denoising performances with a fixed threshold $\alpha = 0.05$ and a trainable threshold $\alpha$ in each individual layer. Fig.~\ref{fig:threshold} compares the denoising performances of smooth graph signals as a function of the number of graph signals, where the $x$-axis is the number of graph signals and the $y$-axis is NMSE for Plot (a) and NMAE for Plot (b). 
For denoising a few graph signals, a fixed threshold leads to a better performance. When we increase the number of graph signals, the gap between a fixed threshold and a trainable threshold becomes negligable.

\mypar{Convergence analysis}
To validate the convergence of the proposed networks, we record the network output at each epoch and compare it with noisy graph signals and clean graph signals in terms of NMSE. Fig.~\ref{fig:convergence} shows the logarithm-scale NMSE as a function of the number of epochs for a smooth graph signal and a piecewise-smooth graph signal, respectively. In each plot, the blue curve shows the difference between the denoised graph signal and the noisy graph signal; and the red curve shows the difference between the denoised graph signal and the clean graph signal. We see that, even when we use the noisy graph signal as the supervision to train the network, the denoised output is much closer to the clean graph signal, which is always unknown to the network. Specifically for the blue curve, the loss goes down quickly in the beginning because the network easily captures the graph-structure-related component from  the noisy graph signal. Later on, the loss does not continue to drop as the network does not fit the noisy component, leading to a clean graph signal. This reflects the implicit graph regularization of the proposed network.

\begin{figure}[thb]
  \begin{center}
    \begin{tabular}{cc}
   \includegraphics[width=0.47\columnwidth]{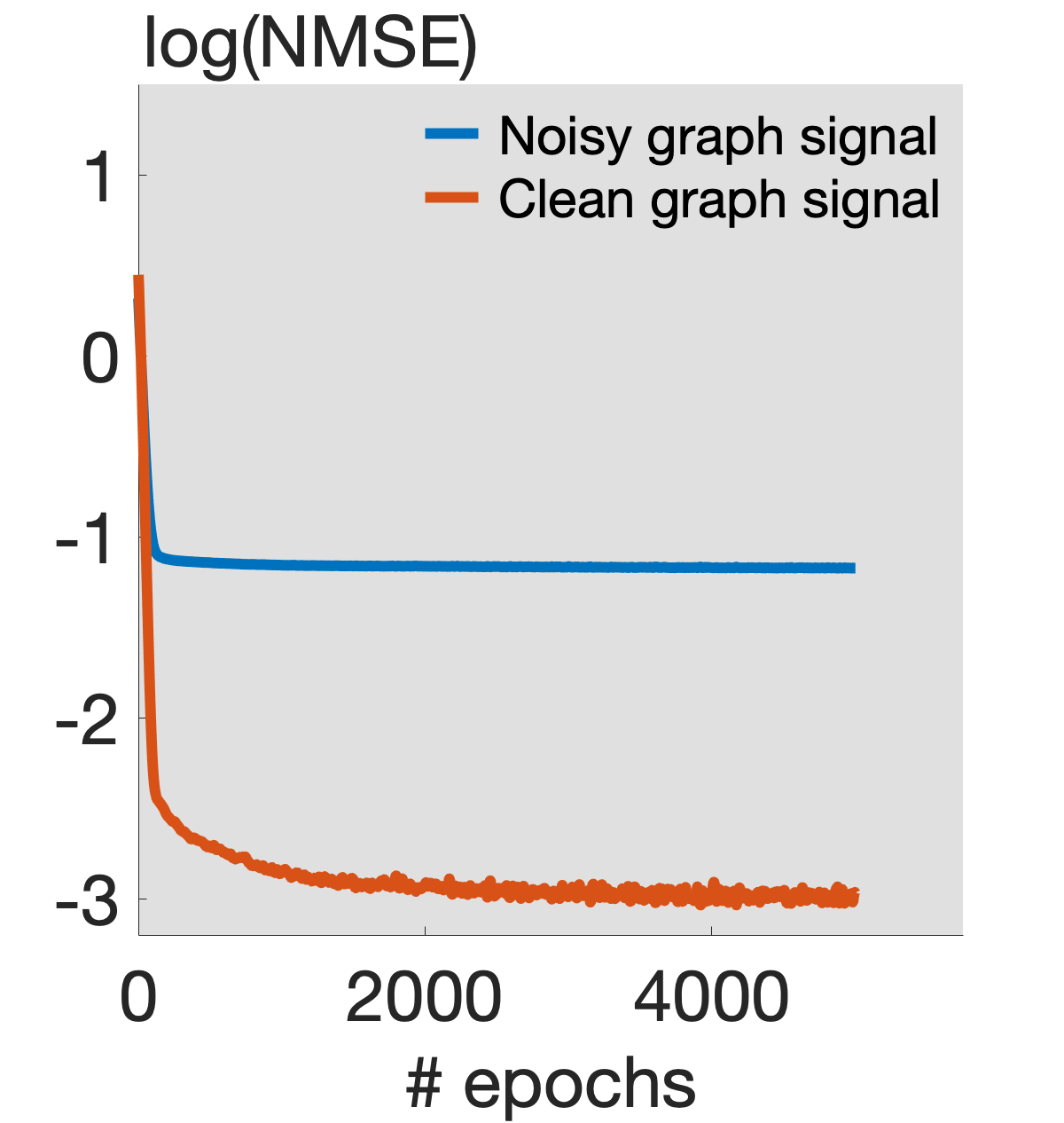} &      \includegraphics[width=0.47\columnwidth]{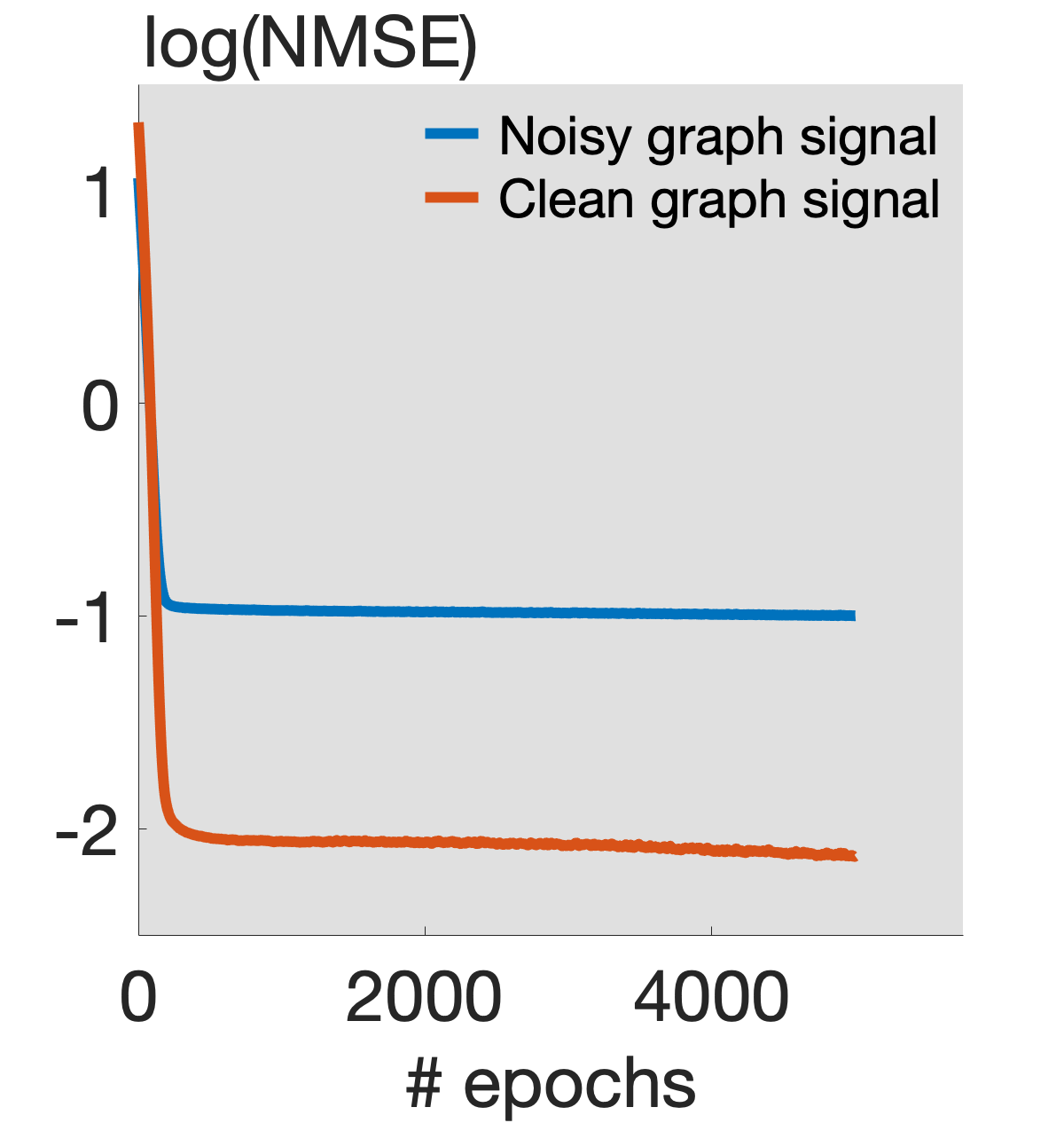}
    \\
    {\small (a) Smooth.} &  {\small  (b) Piecewise-smooth.}
  \end{tabular}
\end{center}
\caption{\label{fig:convergence} NMSE between the denoised graph signal and the target as a function of the number of epochs. The target is either a noisy graph signal (marked in blue) or a clean graph signal (marked in red). For the blue curve, the loss goes down quickly in the beginning because the network easily captures the graph-structure-related component from  the noisy graph signal. Later on, the loss does not continue to drop as the network does not fit the noisy component, leading to a clean graph signal.
}
\end{figure}

\begin{figure}[thb]
  \begin{center}
    \begin{tabular}{cc}
   \includegraphics[width=0.45\columnwidth]{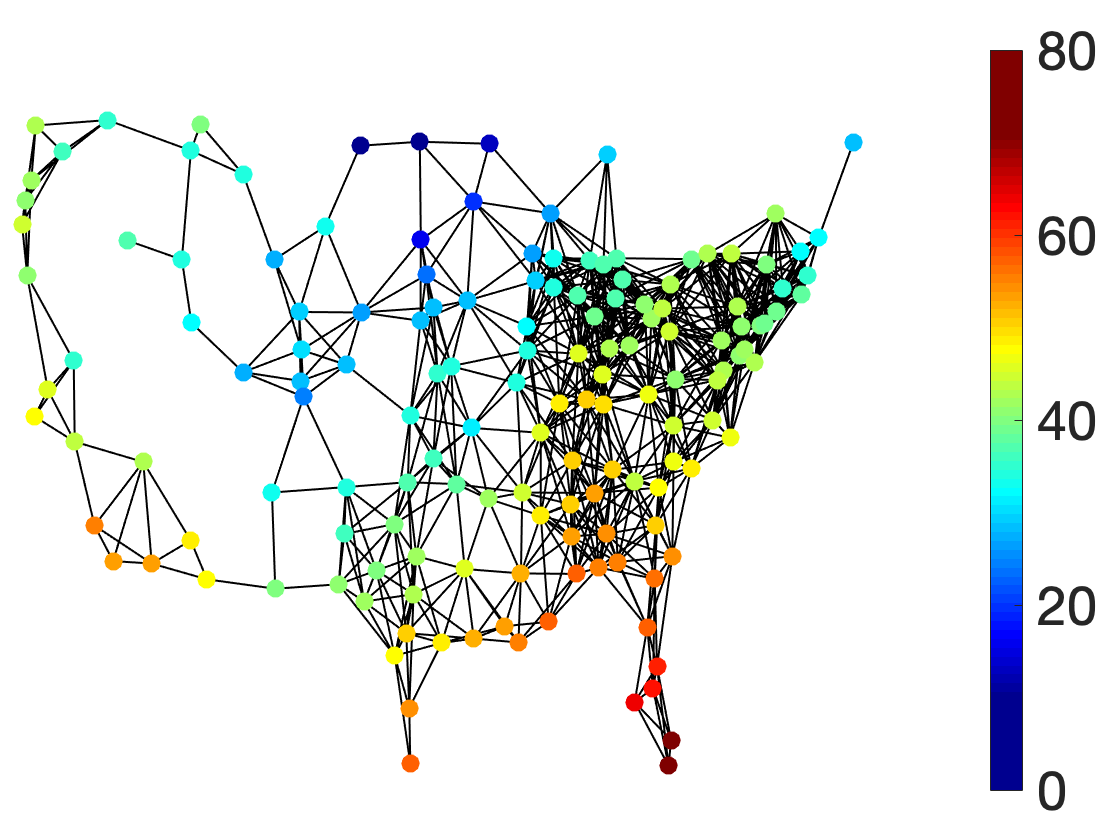} &      \includegraphics[width=0.5\columnwidth]{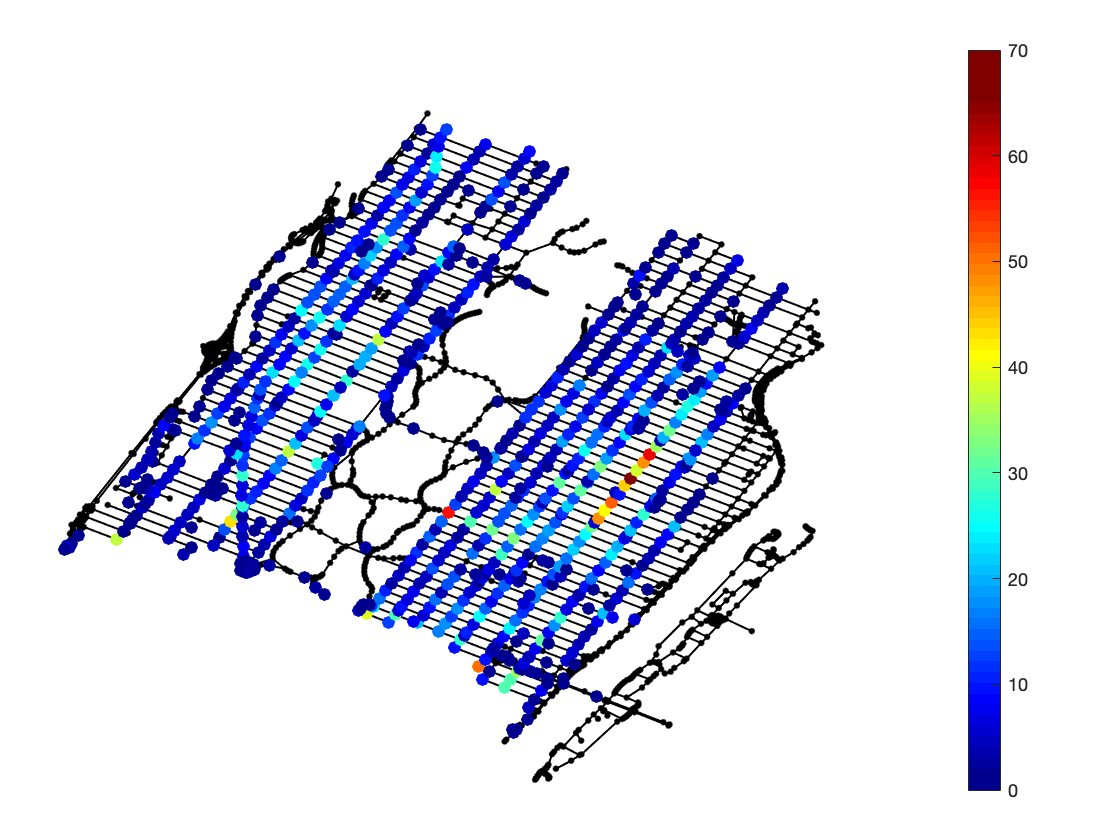}
    \\
    {\small (a) Temperature in the U.S.} &  {\small  (b) Traffic data in Manhattan.}
  \end{tabular}
\end{center}
\caption{\label{fig:real_world_example} Visualization of of real-world examples.}
\end{figure}

\subsection{Real-world examples}
\mypar{U.S. temperature data} We consider 150 weather stations in the United States that record
their local temperatures~\cite{SandryhailaM:13}. Each weather station has 365 days of
recordings (one recording per day), for a total of 54,750 measurements.  The graph
representing these weather stations is obtained by measuring the
geodesic distance between each pair of weather stations. The vertices are
represented by an $8$-nearest neighbor graph, in which vertices represent
weather stations, and each vertex is connected to eight other vertices that
represent the eight closest weather stations. Each graph signal is the daily temperature values recorded in each weather station; see one example in Fig.~\ref{fig:real_world_example} (a). We have $365$ graph signals in total. Intuitively, those graph signals are smooth over the underlying graph because neighboring weather stations record similar temperatures.

We denoise two different numbers of graph signals: $1$ and $365$. Columns $2 - 3$ in Tables~\ref{tab:real_world_denoise} compare the denoising performances under the mixture noise. We see that i) the proposed GUSC significantly outperforms all the other competitive methods in terms of NMSE; ii) the proposed GUTF does not work well for a single graph signal, but performs well when more training data is given; and iii) The graph Laplacian denoising achieves the best performance among the conventional graph signal denoising methods. The reason behind that is that the temperature data is smooth over the sensor network and the graph Laplacian-based prior nicely captures the smooth signal prior.

\mypar{NYC traffic data} 
We consider the taxi-pickup activity in Manhattan on January 1th, 2014. This is the Manhattan street network with $2,552$ intersections and $3,153$ road segments. We model each intersection as a vertex and each road segment as an edge. We model the taxi-pickup positions as signals supported on the Manhattan street network. We project each taxi-pickup to its nearest intersection, and then count the number of taxi-pickups at each intersection. Each graph signal is the hourly number of taxi-pickups recorded in each intersection; see one example in Fig.~\ref{fig:real_world_example} (b). We consider $24$ graph signals for one day in total. Compared to temperature data, the  graph signals here are much more complicated. Even two adjacent intersections have correlations, but they could have drastically different traffic behaviors.

We denoise two different numbers of graph signals: $1$ and $24$. Columns $4 - 5$ in Table~\ref{tab:real_world_denoise} compare the denoising performances under the mixture noise. Different from many other cases, all the denoising algorithms fail to achieve fine performances. The reason might be that this traffic data is too complicated. In this situation, we see that when more training data is available, the proposed two graph unrolling networks get better performance, reflecting the powerful learning ability to adapt to complicated data.

\mypar{Cora} 
We finally consider a citation network dataset, called Cora~\cite{KipfW:17}. The datasets contain sparse bag-of-words feature vectors for each document and a list of citation links between documents. We treat each citation link as an undirected edge and each document as a class label. The citation network has $2,708$ nodes and $5,429$ edges and $7$ class labels. We consider $7$ class labels as graph signals supported on this citation network. We introduce Bernoulli noises and randomly flip $10\%$ of the binary values.

We denoise two different numbers of graph signals: $1$ and $7$. Columns $6 - 9$ in Tables~\ref{tab:real_world_denoise} compare the denoising performances under Bernoulli noise with two evaluation metrics. 
For error rates, lower values mean better performances; for F1 scores, higher values mean better performances. We see that i) the proposed GUSC achieves the best performances in terms of both error rates and F1 scores; 2) since zeros appear much more frequently than ones in graph signals, most conventional methods tend to generate zero values everywhere, leading to good error rates, but bad F1 scores.

\section{Conclusions} 
We propose graph unrolling networks, which is an interpretable neural network framework to denoise single or multiple noisy graph signals. The proposed graph unrolling networks expand algorithm unrolling to the graph domain. As a core component of graph unrolling networks, we propose an edge-weight-sharing graph convolution operation, which parameterizes each edge weight by a trainable kernel function where the trainable  parameters are shared by all the edges. This convolution is permutation-equivariant and can flexibly adjust the edge weights to various graph signals. In the framework of graph unrolling networks, we propose two specific networks, graph unrolling sparse coding and graph unrolling trend filtering, by unrolling sparse coding and trend filtering, respectively. Through extensive experiments, we show that the proposed methods produce smaller denoising errors than both conventional denoising algorithms and state-of-the-art graph neural networks. Even for denoising a single graph signal, the normalized mean square error of the proposed networks is around $40\%$ and $60\%$ lower than that of graph Laplacian denoising and graph wavelets, respectively, reflecting the advantages of learning from only a few training samples.

\bibliographystyle{IEEEbib}
\bibliography{refs}

\begin{thebibliography}{10}

\bibitem{OrtegaFKMV:18}
A.~Ortega, P.~Frossard, J.~Kova{\v c}evi{\'c}, J.~M.~F. Moura, and
  P.~Vandergheynst,
\newblock ``Graph signal processing: Overview, challenges, and applications,''
\newblock {\em Proceedings of the {IEEE}}, vol. 106, no. 5, pp. 808--828, 2018.

\bibitem{HammondVG:11}
D.~K. Hammond, P.~Vandergheynst, and R.~Gribonval,
\newblock ``Wavelets on graphs via spectral graph theory,''
\newblock {\em Appl. Comput. Harmon. Anal.}, vol. 30, pp. 129--150, Mar. 2011.

\bibitem{ShumanFV:16}
D.~I. Shuman, M.~Javad Faraji, and P.~Vandergheynst,
\newblock ``A multiscale pyramid transform for graph signals,''
\newblock {\em {IEEE} Trans. Signal Process.}, vol. 64, no. 8, pp. 2119--2134,
  2016.

\bibitem{ChenVSK:15}
S.~Chen, R.~Varma, A.~Sandryhaila, and J.~Kova{\v c}evi{\'c},
\newblock ``Discrete signal processing on graphs: {S}ampling theory,''
\newblock {\em IEEE Trans. Signal Process.}, vol. 63, no. 24, pp. 6510--6523,
  Dec. 2015.

\bibitem{AnisAO:15}
A.~Anis, A.~Gadde, and A.~Ortega,
\newblock ``Efficient sampling set selection for bandlimited graph signals
  using graph spectral proxies,''
\newblock {\em IEEE Trans. Signal Process.}, vol. 64, pp. 3775--3789, July
  2016.

\bibitem{DongTRF:19}
X.~Dong, D.~Thanou, M.~Rabbat, and P.~Frossard,
\newblock ``Learning graphs from data: {A} signal representation perspective,''
\newblock {\em {IEEE} Signal Process. Mag.}, vol. 36, no. 3, pp. 44--63, 2019.

\bibitem{BronsteinBLSV:17}
M.~M. Bronstein, J.~Bruna, Y.~LeCun, A.~Szlam, and P.~Vandergheynst,
\newblock ``Geometric deep learning: Going beyond euclidean data,''
\newblock {\em {IEEE} Signal Process. Mag.}, vol. 34, no. 4, pp. 18--42, 2017.

\bibitem{KipfW:17}
T.~N. Kipf and M.~Welling,
\newblock ``Semi-supervised classification with graph convolutional networks,''
\newblock in {\em 5th International Conference on Learning Representations,
  {ICLR} 2017, Toulon, France, April 24-26, 2017, Conference Track
  Proceedings}, 2017.

\bibitem{WangSLSBS:19}
Y.~Wang, Y.~Sun, Z.~Liu, S.~E. Sarma, M.~M. Bronstein, and J.~M. Solomon,
\newblock ``Dynamic graph {CNN} for learning on point clouds,''
\newblock {\em {ACM} Trans. Graph.}, vol. 38, no. 5, pp. 146:1--146:12, 2019.

\bibitem{GilmerSRVD:17}
J.~Gilmer, S.~S. Schoenholz, P.~F. Riley, O.~Vinyals, and G.~E. Dahl,
\newblock ``Neural message passing for quantum chemistry,''
\newblock in {\em Proceedings of the 34th International Conference on Machine
  Learning, {ICML} 2017, Sydney, NSW, Australia, 6-11 August 2017}. 2017,
  vol.~70 of {\em Proceedings of Machine Learning Research}, pp. 1263--1272,
  {PMLR}.

\bibitem{LiCZZWT:20}
M.~Li, S.~Chen, Y.~Zhao, Y.~Zhang, Y.~Wang, and Q.~Tian,
\newblock ``Dynamic multiscale graph neural networks for 3d skeleton-based
  human motion prediction,''
\newblock in {\em Proc. IEEE Int. Conf. Comput. Vis. Pattern Recogn.}, 2020.

\bibitem{ShumanNFOV:13}
D.~I. Shuman, S.~K. Narang, P.~Frossard, A.~Ortega, and P.~Vandergheynst,
\newblock ``The emerging field of signal processing on graphs: {E}xtending
  high-dimensional data analysis to networks and other irregular domains,''
\newblock {\em IEEE Signal Process. Mag.}, vol. 30, pp. 83--98, May 2013.

\bibitem{ChenSMK:14a}
S.~Chen, A.~Sandryhaila, J.~M.~F. Moura, and J.~Kova{\v c}evi{\'c},
\newblock ``Signal denoising on graphs via graph filtering,''
\newblock Dec. 2014, pp. 872--876.

\bibitem{VetterliKG:12}
M.~Vetterli, J.~Kova{\v c}evi{\'c}, and V.~K. Goyal,
\newblock {\em Foundations of Signal Processing},
\newblock Cambridge University Press, Cambridge, 2014,
\newblock http://foundationsofsignalprocessing.org.

\bibitem{WangSST:16}
Y{-}X. Wang, J.~Sharpnack, A.~J. Smola, and R.~J. Tibshirani,
\newblock ``Trend filtering on graphs,''
\newblock {\em J. Mach. Learn. Res.}, vol. 17, pp. 105:1--105:41, 2016.

\bibitem{VarmaLKC:20}
R.~Varma, H.~Lee, JJ. Kova{\v c}evi{\'c}, and Y.~Chi,
\newblock ``Vector-valued graph trend filtering with non-convex penalties,''
\newblock {\em {IEEE} Trans. Signal and Information Processing over Networks},
  vol. 6, pp. 48--62, 2020.

\bibitem{RudinOF:92}
L.~I. Rudin, S.~Osher, and E.~Fatemi,
\newblock ``Nonlinear total variation based noise removal algorithms,''
\newblock {\em Physica D}, , no. 1--4, pp. 259--268, Nov. 1992.

\bibitem{Chambolle:04}
A.~Chambolle,
\newblock ``An algorithm for total variation minimization and applications,''
\newblock {\em Journal of Mathematical Imaging and Vision}, vol. 20, no. 1-2,
  pp. 89--97, 2004.

\bibitem{Chung:96}
F.~R.~K. Chung,
\newblock {\em Spectral Graph Theory (CBMS Regional Conference Series in
  Mathematics, No. 92)},
\newblock 1996.

\bibitem{ChenSMK:15}
S.~Chen, A.~Sandryhaila, J.~M.~F. Moura, and J.~Kova{\v c}evi{\'c},
\newblock ``Signal recovery on graphs: Variation minimization,''
\newblock {\em {IEEE} Trans. Signal Process.}, vol. 63, no. 17, pp. 4609--4624,
  2015.

\bibitem{ShafipourKM:19}
R.~Shafipour, A.~Khodabakhsh, and G.~Mateos,
\newblock ``A windowed digraph fourier transform,''
\newblock in {\em {IEEE} International Conference on Acoustics, Speech and
  Signal Processing, {ICASSP} 2019, Brighton, United Kingdom, May 12-17, 2019}.
  2019, pp. 7525--7529, {IEEE}.

\bibitem{ShumanWHV:15}
D.~I. Shuman, C.~Wiesmeyr, N.~Holighaus, and P.~Vandergheynst,
\newblock ``Spectrum-adapted tight graph wavelet and vertex-frequency frames,''
\newblock {\em {IEEE} Trans. Signal Process.}, vol. 63, no. 16, pp. 4223--4235,
  2015.

\bibitem{EldarK:12}
Y.~C. Eldar and G.~Kutyniok,
\newblock {\em Compressed Sensing: Theory and Applications},
\newblock Cambridge University Press, 2012.

\bibitem{Eldar:15}
Y.~C. Eldar,
\newblock {\em Sampling theory: Beyond bandlimited systems},
\newblock Cambridge University Press, 2015.

\bibitem{NarangO:12}
S.~K. Narang and A.~Ortega,
\newblock ``Perfect reconstruction two-channel wavelet filter banks for graph
  structured data,''
\newblock {\em IEEE Trans. Signal Process.}, vol. 60, pp. 2786--2799, June
  2012.

\bibitem{SakiyamaWTO:19}
A.~Sakiyama, K.~Watanabe, Y.~Tanaka, and A.~Ortega,
\newblock ``Two-channel critically sampled graph filter banks with spectral
  domain sampling,''
\newblock {\em {IEEE} Trans. Signal Process.}, vol. 67, no. 6, pp. 1447--1460,
  2019.

\bibitem{Goodfellow:2016}
I.~Goodfellow, Y.~Bengio, and A.~Courville,
\newblock {\em Deep Learning},
\newblock MIT Press, 2016,
\newblock \url{http://www.deeplearningbook.org}.

\bibitem{Battaglia:18}
P.~W. Battaglia, J.~B. Hamrick, V.~Bapst, A.~Sanchez{-}Gonzalez, V.~Flores
  Zambaldi, M.~Malinowski, A.~Tacchetti, D.~Raposo, A.~Santoro, R.~Faulkner,
  Caglar Gulcehre, H.~Francis Song, A.~J. Ballard, J.~Gilmer, G.~E. Dahl,
  A.~Vaswani, K.~R. Allen, C.~Nash, V.~Langston, C.~Dyer, N.~Heess,
  D.~Wierstra, P.~Kohli, M.~Botvinick, O.~Vinyals, Y.~Li, and R.~Pascanu,
\newblock ``Relational inductive biases, deep learning, and graph networks,''
\newblock {\em CoRR}, vol. abs/1806.01261, 2018.

\bibitem{BrunaZSL:13}
J.~Bruna, W.~Zaremba, A.~Szlam, and Y.~LeCun,
\newblock ``Spectral networks and locally connected networks on graphs,''
\newblock in {\em 2nd International Conference on Learning Representations,
  {ICLR} 2014, Banff, AB, Canada, April 14-16, 2014, Conference Track
  Proceedings}, Yoshua Bengio and Yann LeCun, Eds., 2014.

\bibitem{DefferrardBV:16}
M.~Defferrard, X.~Bresson, and P.~Vandergheynst,
\newblock ``Convolutional neural networks on graphs with fast localized
  spectral filtering,''
\newblock in {\em Advances in Neural Information Processing Systems 29: Annual
  Conference on Neural Information Processing Systems 2016, December 5-10,
  2016, Barcelona, Spain}, 2016, pp. 3837--3845.

\bibitem{VelivckovicCCRLB:18}
P.~Veli{\v{c}}kovi{\'c}, G.~Cucurull, A.~Casanova, A.~Romero, P.~Lio, and
  Y.~Bengio,
\newblock ``Graph attention networks,''
\newblock in {\em 6th International Conference on Learning Representations,
  {ICLR} 2018, Vancouver, BC, Canada, April 30 - May 3, 2018, Conference Track
  Proceedings}, 2018.

\bibitem{VelickovicFHLBH:19}
P.~Velickovic, W.~Fedus, W.~L. Hamilton, P.~Li{\`{o}}, Y.~Bengio, and R.~Devon
  Hjelm,
\newblock ``Deep graph infomax,''
\newblock in {\em 7th International Conference on Learning Representations,
  {ICLR} 2019, New Orleans, LA, USA, May 6-9, 2019}, 2019.

\bibitem{WuSZFYW:19}
F.~Wu, A.~H.~Souza Jr., T.~Zhang, C.~Fifty, T.~Yu, and K.~Q. Weinberger,
\newblock ``Simplifying graph convolutional networks,''
\newblock in {\em Proceedings of the 36th International Conference on Machine
  Learning, {ICML} 2019, 9-15 June 2019, Long Beach, California, {USA}}, 2019,
  vol.~97 of {\em Proceedings of Machine Learning Research}, pp. 6861--6871.

\bibitem{GaoJ:19}
H.~Gao and S.~Ji,
\newblock ``Graph u-nets,''
\newblock in {\em Proceedings of the 36th International Conference on Machine
  Learning, {ICML} 2019, 9-15 June 2019, Long Beach, California, {USA}}, 2019,
  vol.~97, pp. 2083--2092.

\bibitem{ZhangCNC:18}
M.~Zhang, Z.~Cui, M.~Neumann, and Y.~Chen,
\newblock ``An end-to-end deep learning architecture for graph
  classification,''
\newblock in {\em Proceedings of the Thirty-Second {AAAI} Conference on
  Artificial Intelligence, (AAAI-18), New Orleans, Louisiana, USA, February
  2-7, 2018}. 2018, pp. 4438--4445, {AAAI} Press.

\bibitem{SolomonCZYHLSE:18}
O.~Solomon, R.~Cohen, Y.~Zhang, Y.~Yang, Q.~He, J.~Luo, R.~J.~G. van Sloun, and
  Y.~C. Eldar,
\newblock ``Deep unfolded robust {PCA} with application to clutter suppression
  in ultrasound,''
\newblock {\em IEEE Transactions on Medical Imaging}, vol. 39, pp. 1051--1063,
  April 2020.

\bibitem{ZhangC:18}
M.~Zhang and Y.~Chen,
\newblock ``Link prediction based on graph neural networks,''
\newblock in {\em Advances in Neural Information Processing Systems 31: Annual
  Conference on Neural Information Processing Systems 2018, NeurIPS 2018, 3-8
  December 2018, Montr{\'{e}}al, Canada}, 2018, pp. 5171--5181.

\bibitem{MongaLE:19}
V.~Monga, Y.~Li, and Y.~C. Eldar,
\newblock ``Algorithm unrolling: Interpretable, efficient deep learning for
  signal and image processing,''
\newblock {\em IEEE Signal Processing Magazine}, vol. abs/1912.10557, 2019.

\bibitem{GregorL:10}
K.~Gregor and Y.~LeCun,
\newblock ``Learning fast approximations of sparse coding,''
\newblock in {\em Proceedings of the 27th International Conference on Machine
  Learning (ICML-10), June 21-24, 2010, Haifa, Israel}. 2010, pp. 399--406,
  Omnipress.

\bibitem{LiTME:19}
Y.~Li, M.~Tofighi, V.~Monga, and Y.~C. Eldar,
\newblock ``An algorithm unrolling approach to deep image deblurring,''
\newblock in {\em {IEEE} International Conference on Acoustics, Speech and
  Signal Processing, {ICASSP} 2019, Brighton, United Kingdom, May 12-17, 2019}.
  2019, pp. 7675--7679, {IEEE}.

\bibitem{LiTGME:20}
Y.~Li, M.~Tofighi, J.~Geng, V.~Monga, and Y.~C Eldar,
\newblock ``Efficient and interpretable deep blind image deblurring via
  algorithm unrolling,''
\newblock {\em IEEE Trans. Signal Process.}, vol. 6, pp. 666--681, Jan. 2020.

\bibitem{ShabiliMB:20}
A.~H. Al-Shabili, H.~Mansour, and P.~T. Boufounos,
\newblock ``Learning plug-and-play proximal quasi-newton denoisers,''
\newblock in {\em 2020 {IEEE} International Conference on Acoustics, Speech and
  Signal Processing, {ICASSP}, Barcelona, Spain, 2020}, 2020.

\bibitem{SandryhailaM:13}
A.~Sandryhaila and J.~M.~F. Moura,
\newblock ``Discrete signal processing on graphs,''
\newblock {\em IEEE Trans. Signal Process.}, vol. 61, no. 7, pp. 1644--1656,
  Apr. 2013.

\bibitem{ChenVSK:16}
S.~Chen, R.~Varma, A.~Singh, and J.~Kova{\v c}evi{\'c},
\newblock ``Representations of piecewise smooth signals on graphs,''
\newblock in {\em 2016 {IEEE} International Conference on Acoustics, Speech and
  Signal Processing, {ICASSP} 2016, Shanghai, China, March 20-25, 2016}, 2016,
  pp. 6370--6374.

\bibitem{ZhuLG:03}
X.~Zhu, J.~Lafferty, and Z.~Ghahramani,
\newblock ``Combining active learning and semi-supervised learning using
  {G}aussian fields and harmonic functions,''
\newblock in {\em Proc. Int. Conf. Mach. Learn. Workshop on Continuum from
  Labeled to Unlabeled Data in Mach. Learn. Data Mining}, 2003, pp. 58--65.

\bibitem{NgJW:01}
A.~Y. Ng, M.~I. Jordan, and Y.~Weiss,
\newblock ``On spectral clustering: Analysis and an algorithm,''
\newblock in {\em Advances in Neural Information Processing Systems 14 [Neural
  Information Processing Systems: Natural and Synthetic, {NIPS} 2001, December
  3-8, 2001, Vancouver, British Columbia, Canada]}. 2001, pp. 849--856, {MIT}
  Press.

\bibitem{SegarraMR:16}
S.~Segarra, A.~G. Marques, and A.~Ribeiro,
\newblock ``Linear network operators using node-variant graph filters,''
\newblock in {\em 2016 {IEEE} International Conference on Acoustics, Speech and
  Signal Processing, {ICASSP} 2016, Shanghai, China, March 20-25, 2016}. 2016,
  pp. 4850--4854, {IEEE}.

\bibitem{CoutinoIL:17}
M.~Coutino, E.~Isufi, and G.~Leus,
\newblock ``Distributed edge-variant graph filters,''
\newblock in {\em 2017 {IEEE} 7th International Workshop on Computational
  Advances in Multi-Sensor Adaptive Processing, {CAMSAP} 2017, Curacao, The
  Netherlands, December 10-13, 2017}. 2017, pp. 1--5, {IEEE}.

\bibitem{IsufiGR:20}
E.~Isufi, F.~Gama, and A.~Ribeiro,
\newblock ``Edgenets: Edge varying graph neural networks,''
\newblock {\em CoRR}, vol. arxiv/abs/2001.07620, 2020.

\bibitem{GamaMLR:19}
F.~Gama, A.~G. Marques, G.~Leus, and A.~Ribeiro,
\newblock ``Convolutional neural network architectures for signals supported on
  graphs,''
\newblock {\em {IEEE} Trans. Signal Process.}, vol. 67, no. 4, pp. 1034--1049,
  2019.

\bibitem{PerozziAS:14}
B.~Perozzi, R.~Al{-}Rfou, and S.~Skiena,
\newblock ``Deepwalk: online learning of social representations,''
\newblock in {\em The 20th {ACM} {SIGKDD} International Conference on Knowledge
  Discovery and Data Mining, {KDD} '14, New York, NY, {USA} - August 24 - 27,
  2014}. 2014, pp. 701--710, {ACM}.

\bibitem{WangYYZ:08}
Y.~Wang, J.~Yang, W.~Yin, and Y.~Zhang,
\newblock ``A new alternating minimization algorithm for total variation image
  reconstruction,''
\newblock {\em {SIAM} J. Imaging Sciences}, vol. 1, no. 3, pp. 248--272, 2008.

\bibitem{BoydPCPE:11}
S.~Boyd, N.~Parikh, E.~Chu, B.~Peleato, and J.~Eckstein,
\newblock ``Distributed optimization and statistical learning via the
  alternating direction method of multipliers,''
\newblock {\em Found. Trends Mach. Learn.}, vol. 3, no. 1, pp. 1--122, Jan.
  2011.

\bibitem{TremblayB:16}
N.~Tremblay and P.~Borgnat,
\newblock ``Subgraph-based filterbanks for graph signals,''
\newblock {\em {IEEE} Trans. Signal Process.}, vol. 64, no. 15, pp. 3827--3840,
  2016.

\bibitem{DoND:10}
T.~Huu Do, D.~Minh Nguyen, and N.~Deligiannis,
\newblock ``Graph auto-encoder for graph signal denoising,''
\newblock in {\em ICASSP 2020 - 2020 IEEE International Conference on
  Acoustics, Speech and Signal Processing (ICASSP)}, 2020.

\bibitem{ZhaoA:20}
L.~Zhao and L.~Akoglu,
\newblock ``Pairnorm: Tackling oversmoothing in gnns,''
\newblock in {\em 8th International Conference on Learning Representations,
  {ICLR} 2020, Addis Ababa, Ethiopia, April 26-30, 2020}. 2020, OpenReview.net.

\bibitem{ChenYZSK:17}
S.~Chen, Y.~Yang, S.~Zong, A.~Singh, and J.~Kova{\v c}evi{\'c},
\newblock ``Detecting localized categorical attributes on graphs,''
\newblock {\em {IEEE} Trans. Signal Process.}, vol. 65, no. 10, pp. 2725--2740,
  2017.

\end{thebibliography}
\end{document}